\newcommand{\showscript}{false}
\newcommand{\version}{long}
\ifundef{\tqc}{\usepackage{fullpage}}{}
\newclass{\ioPSPACE}{i.o.\text{-}PSPACE}
\newlang{\Halt}{Halt}
\ifundef{\tqc}{\usepackage{etoolbox,ifthen,tabulary}

\newcommand{\indraft}[1]{\ifthenelse{\equal{\version}{draft}}{#1}{}}
\newcommand{\infinal}[1]{\ifthenelse{\equal{\version}{final}}{#1}{Only shown in the final version}}
\newcommand{\inshort}[1]{\ifthenelse{\equal{\version}{short}}{#1}{}}
\newcommand{\inlong}[1]{\ifthenelse{\equal{\version}{long}}{#1}{}}
\newcommand{\inshortorlong}[2]{\inshort{#1}\inlong{#2}}

\newcommand{\intqcornot}[2]{\ifdef{\tqc}{#1}{#2}}
\newcommand{\intqc}[1]{\intqcornot{#1}{}}
\newcommand{\notintqc}[1]{\intqcornot{}{#1}}

\newcommand{\appref}[1]{Appendix~\ref{app:#1}}
\newcommand{\secref}[1]{Section~\ref{sec:#1}}

\newcommand{\renameenv}[2]{
  \expandafter\let\csname #1#2\expandafter\endcsname
  \csname #1\endcsname
  \expandafter\let\csname end#1#2\expandafter\endcsname
  \csname end#1\endcsname
  \expandafter\let\csname #2\endcsname\relax
  \expandafter\let\csname end#2\endcsname\relax}

\ifundef{\defaultlists}{
  \usepackage[inline,shortlabels]{enumitem}
  \setenumerate[1]{(a),itemsep=0pt,topsep=3pt,parsep=0pt,partopsep=0pt}
  \setenumerate[2]{(i),noitemsep,topsep=3pt,parsep=0pt,partopsep=0pt}
  \setenumerate[3]{(A),noitemsep,topsep=3pt,parsep=0pt,partopsep=0pt}
  \setenumerate[4]{(I),noitemsep,topsep=3pt,parsep=0pt,partopsep=0pt}
  \setitemize{noitemsep,topsep=3pt,parsep=0pt,partopsep=0pt}
  \setdescription{noitemsep,topsep=3pt,parsep=0pt,partopsep=0pt}
  \setlist{noitemsep,topsep=3pt,parsep=0pt,partopsep=0pt}}{}

\newcolumntype{x}[1]{>{\centering\arraybackslash}m{#1}}
}{} 
\let\ket\relax
\let\eqref\relax
\DeclareMathOperator{\tr}{tr}
\DeclareFontFamily{U}{mathx}{\hyphenchar\font45}
\DeclareFontShape{U}{mathx}{m}{n}{
      <5> <6> <7> <8> <9> <10>
      <10.95> <12> <14.4> <17.28> <20.74> <24.88>
      mathx10
      }{}
\DeclareSymbolFont{mathx}{U}{mathx}{m}{n}
\DeclareMathSymbol{\bigtimes}{1}{mathx}{"91}
\algnewcommand{\Input}{\item[\textbf{Input:}]}
\algnewcommand{\Output}{\item[\textbf{Output:}]}
\newcommand\restr[2]{{
  \left.\kern-\nulldelimiterspace
  #1
  \vphantom{\big|}
  \right|_{#2}
  }}
\newcommand{\abs}[1]{\left\vert#1\right\vert}
\newcommand{\coleq}{\coloneqq}
\newcommand{\norm}[1]{\left\Vert#1\right\Vert}
\newcommand{\trnorm}[1]{\norm{#1}_{\rm{tr}}}
\newcommand{\ket}[1]{\left|#1\right\rangle}
\newcommand{\setb}[2]{\left\{#1 \;\middle|\; #2\right\}}
\newcommand{\kD}[1]{\ensuremath{#1\mathrm{D}}}
\newcommand{\nth}[1]{\ensuremath{{#1}^{\mathrm{th}}}}
\renewcommand{\algref}[1]{Algorithm~\ref{alg:#1}}
\newcommand{\figref}[1]{Figure~\ref{fig:#1}}
\newcommand{\thmref}[1]{Theorem~\ref{thm:#1}}
\newcommand{\lemref}[1]{Lemma~\ref{lem:#1}}
\newcommand{\defref}[1]{Definition~\ref{defn:#1}}
\newcommand{\corref}[1]{Corollary~\ref{cor:#1}}
\newcommand{\eqref}[1]{(\ref{eq:#1})}
\newcommand{\eps}{\epsilon}
\newcommand{\bbN}{\mathbb{N}}
\newcommand{\bbZ}{\mathbb{Z}}
\newcommand{\cD}{\mathcal{D}}
\newcommand{\cE}{\mathcal{E}}
\newcommand{\cF}{\mathcal{F}}
\newcommand{\ra}{\rightarrow}
\newcommand{\CU}[1]{\mathrm{C#1}}
\newcommand{\swap}{\text{swap}}
\newcommand{\teleport}{\text{teleport}}
\newcommand{\B}{\{0, 1\}}
\def\thm@space@setup{\thm@preskip=3pt \thm@postskip=3pt}
\renewenvironment{proof}[1][\proofname]{\par
  \pushQED{\qed}
  \normalfont
  \topsep3pt \partopsep0pt 
  \trivlist
  \item[\hskip\labelsep
        \itshape
    #1\@addpunct{.}]\ignorespaces
  }{
    \popQED\endtrivlist\@endpefalse
    \addvspace{0pt plus 0pt} 
  }
\ifundef{\dontnumberwithin}{\declaretheorem[numberwithin=section]{dummy}}{\declaretheorem{dummy}} 
\declaretheorem[sibling=dummy]{theorem}
\declaretheorem[sibling=dummy]{lemma}
\declaretheorem[sibling=dummy]{definition}
\declaretheorem[sibling=dummy]{corollary}
\ifundef{\defaultthmcontinues}{\renewcommand{\thmcontinues}[1]{}}{}
\newcommand{\reorder}{\text{Reorder}}
\newcommand{\interact}{\text{Interact}}
\newcommand{\control}{\text{Control}}
\newcommand{\cstage}{\text{Control-Stage}}
\newcommand{\conclock}{\text{Control-Clockwise}}
\newcommand{\crotate}{\text{Rotate}}
\newcommand{\resetsubfig}{\setcounter{subfigure}{0}}
\begin{document}

\title{Optimal Quantum Circuits for Nearest-Neighbor Architectures}
\notintqc{\author{David J. Rosenbaum \\ University of Washington, \\ Department of Computer Science \& Engineering \\ Email: djr@cs.washington.edu}}
\date{May  8, 2013}
\maketitle
\thispagestyle{empty}

\begin{abstract}
  We show that the depth of quantum circuits in the realistic architecture where a classical controller determines which local interactions to apply on the \kD{k} grid $\bbZ^k$ where $k \geq 2$ is the same (up to a constant factor) as in the standard model where arbitrary interactions are allowed.  This allows minimum-depth circuits (up to a constant factor) for the nearest-neighbor architecture to be obtained from minimum-depth circuits in the standard abstract model.  Our work therefore justifies the standard assumption that interactions can be performed between arbitrary pairs of qubits.  In particular, our results imply that Shor's algorithm, controlled operations and fanouts can be implemented in constant depth, polynomial size and polynomial width in this architecture.

  We also present optimal non-adaptive quantum circuits for controlled operations and fanouts on a \kD{k} grid.  These circuits have depth $\Theta(\sqrt[k]{n})$, size $\Theta(n)$ and width $\Theta(n)$.  Our lower bound also applies to a more general class of operations.
\end{abstract}

\notintqc{\newpage}
\notintqc{\setcounter{page}{1}}

\newcommand{\twoDNTC}{
  \begin{figure}[H] 
    \centering
    \subfloat[][Interactions in the \kD{2} NTC architecture: the grid lines indicate the two-qubit interactions which can be performed]{
      \label{fig:2D-NTC-grid}
      \includegraphics{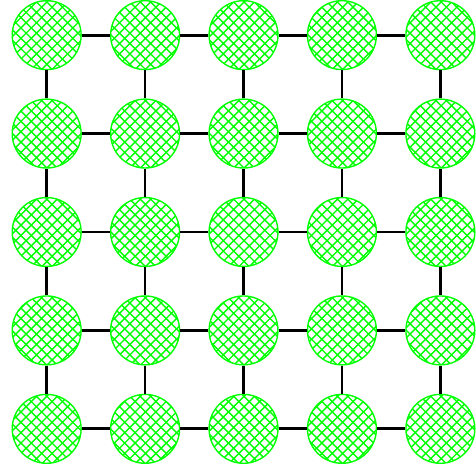}}
    \subfloat[][An example of concurrent interactions in the \kD{2} NTC architecture: the components connected by the thick red edges indicate concurrent interactions and the thick red circles indicate single-qubit interactions]{
      \label{fig:2D-NTC-concurrent}
      \includegraphics{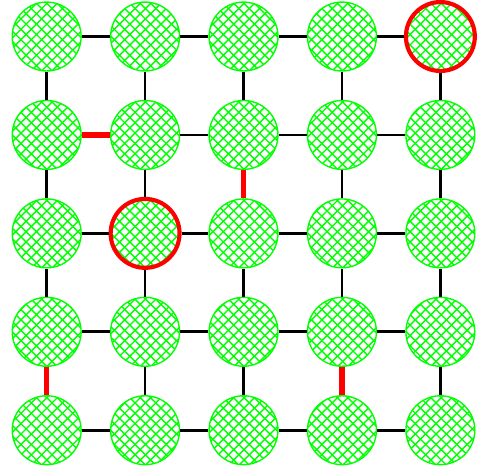}}
    \caption{The \kD{2} NTC architecture}
    \label{fig:2D-NTC}
  \end{figure}
  \resetsubfig
}

\section{Introduction}
Quantum algorithms are typically formulated at an abstract level and allow arbitrary one- and two-qubit interactions.  However, in physical implementations of quantum computers, typically only local interactions between neighboring qubits are possible.  This motivates the \emph{\kD{k} nearest-neighbor two-qubit concurrent} (\kD{k} NTC) architecture~\cite{vanmeter2005a} (cf.~\cite{choi2010a}) in which the qubits are arranged on the \kD{k} grid $\bbZ^k$\notintqc{; this is shown in \figref{2D-NTC-grid} for the case where $k = 2$}.  Operations may involve one or two qubits with the restriction that two-qubit operations may only be performed along an edge in the grid.  Multiple operations may be performed concurrently as long as they are on disjoint sets of qubits\notintqc{; an example is shown in \figref{2D-NTC-concurrent}}. 

\notintqc{\twoDNTC}

The idea of using a classical controller to determine which operations to apply at each step is implicit in the pre- and post-processing stages of Shor's algorithm~\cite{shor1994a} and is often assumed for fault-tolerant quantum computation.  Since the classical controller can take intermediate measurement outcomes into account, this model includes the class of adaptive quantum circuits as a special case.  It is potentially even more powerful since the classical controller can perform randomized polynomial-time computations to determine which operations to apply as well as perform pre- and post-processing.  Since quantum operations are far more expensive than classical operations, we are primarily concerned with the depth of the quantum circuit and do not count the operations performed by the classical controller as long as they take polynomial time.

In this work, we study both the \emph{classical-controller \kD{k} NTC} (\kD{k} CCNTC) architecture --- a classical controller model where interactions are restricted to a \kD{k} grid --- as well as the \emph{non-adaptive \kD{k} NTC}\footnote{The original NTC architecture described by Van Meter and Itoh~\cite{vanmeter2005a} is in fact NANTC; however, we prefer NANTC to avoid confusion with CCNTC where a classical controller is used.} (NANTC) architecture where no classical controller is used and the operations applied cannot depend on intermediate measurement outcomes.  The CCNTC model ignores the cost of offline computations performed by the classical controller and assumes that there are no classical locality restrictions.  This is realistic since the clock rate for a classical computer is much faster than for a quantum computer.  Because quantum computers are already forced to be parallel devices in order to perform operations fault tolerantly~\cite{aharnov1996a}, the total runtime of a quantum circuit is proportional to the depth of the corresponding quantum circuit.  The restriction that interactions are between neighbors on a \kD{k} grid comes from the underlying physical device: in most technologies, only qubits that are spatially close can interact.

We first show how to simulate the standard \emph{classical controller abstract concurrent} (CCAC) architecture in \kD{k} CCNTC with constant factor overhead in the depth.  We accomplish this using a \kD{2} CCNTC teleportation scheme that allows arbitrary interactions on disjoint sets of qubits to be performed in constant depth.


\begin{restatable}{theorem}{simccntc}
  \label{thm:sim-ccntc}
  Suppose that $C$ is a CCAC quantum circuit with depth $d$, size $s$ and width $n$.  Then $C$ can be simulated in $O(d)$ depth, $O(s n)$ size and $n^2$ width in \kD{2} CCNTC.
\end{restatable}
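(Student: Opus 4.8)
The plan is to embed the $n$ logical qubits of $C$ into the first row of an $n\times n$ grid of qubits (one qubit per cell), using the remaining $n(n-1)$ cells as ancillas; this accounts for the $n^2$ width. The simulation then proceeds layer by layer, and the target is to spend only $O(1)$ depth on each layer of $C$, with a layer containing $s_t$ gates costing $O(s_t n)$ size, so that the totals come out to $O(d)$ depth and $O(sn)$ size.

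The workhorse is an $O(1)$-depth long-range teleportation primitive. Given a simple path $v_0,v_1,\dots,v_\ell$ in the grid, the state at $v_0$ is moved to $v_\ell$ thus: in one parallel round prepare EPR pairs on the pairwise-disjoint grid edges $(v_1,v_2),(v_3,v_4),\dots$; in a second round perform Bell measurements on the pairwise-disjoint grid edges $(v_0,v_1),(v_2,v_3),\dots$. This entanglement-swapping chain leaves the state at $v_\ell$ up to a Pauli operator that is a fixed function of the measurement outcomes; the classical controller evaluates that function (for free, in polynomial time) and a third round applies the correction at $v_\ell$. Each cell holds a single qubit throughout, the depth is $O(1)$, the size is $O(\ell)$, and the whole thing runs simultaneously on any family of vertex-disjoint paths. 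With the correction applied, the primitive realizes an exact identity channel on the relocated qubit.

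To simulate one layer of $C$, note it consists of single-qubit gates on distinct cells (applied in place in one round) together with a partial matching $\{\{a_i,b_i\}\}$ of at most $n/2$ two-qubit gates. For each $i$ I teleport only the state of $b_i$ from $(1,b_i)$ to the free cell $(2,a_i)$, which is vertically adjacent to the (untouched) state of $a_i$ at $(1,a_i)$; then all matched gates act on the pairwise-disjoint vertical edges $\{(1,a_i),(2,a_i)\}$ in one round; then the teleportations are reversed. Each relocation is done in three sequential teleportation passes through a dedicated highway row $\rho_i$, the $\rho_i$ chosen distinct in $\{3,\dots,n\}$ (possible since there are at most $n/2$ pairs): (1) down source column $b_i$ to $(\rho_i,b_i)$; (2) along row $\rho_i$ to $(\rho_i,a_i)$; (3) up column $a_i$ to $(2,a_i)$. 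Within pass (1) the paths sit in the distinct columns $\{b_i\}$, within pass (2) in the distinct rows $\{\rho_i\}$, within pass (3) in the distinct columns $\{a_i\}$, and since $\{a_i\}\cap\{b_i\}=\emptyset$ no column is reused; so within each pass the paths are vertex-disjoint and the primitive applies. Passes being sequential in time is what makes this trivial: the down/up segments of one pass never collide with the row segments of a later pass. Hence a layer has $O(1)$ depth; a layer with $g$ matched gates uses six passes of at most $g$ teleportations, each along a path of length $O(n)$, for $O(gn)$ size, plus $O(s_t)$ for the single-qubit and matched gates, i.e.\ $O(s_t n)$. Summing over layers gives $O(d)$ depth, $O(sn)$ size, and $n^2$ width; correctness follows by induction on layers, since each corrected teleportation is an identity channel, so the logical qubits experience exactly the gate sequence and measurement outcomes of $C$, with the simulation's controller performing both $C$'s classical control and the Pauli-frame computations (all polynomial time).

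The step I expect to be the main obstacle is the teleportation primitive: achieving $O(1)$ depth for a distance-$\ell$ move (rather than $O(\log\ell)$ by recursive entanglement swapping or $O(\ell)$ by hop-by-hop teleportation) while keeping just one qubit per grid cell. This is precisely where the classical controller is essential — it computes, at no cost to the quantum depth, the Pauli frame produced by the one simultaneous round of Bell measurements along the swapping chain. The remaining difficulty, keeping the many concurrent routing paths vertex-disjoint, is handled by the highway layout together with the device of moving only one endpoint of each matched pair, which keeps the number of highway rows at most $n/2$ so that the $n\times n$ grid suffices exactly.
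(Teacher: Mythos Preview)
Your proposal is correct and follows essentially the same approach as the paper: embed the $n$ data qubits along one boundary of an $n\times n$ grid, use constant-depth teleportation chains (Bell pairs prepared and Bell measurements performed in parallel along a path, with the classical controller computing the Pauli correction) as the basic long-range move, and route interacting pairs to adjacent cells via vertex-disjoint row/column paths, layer by layer. The only cosmetic difference is the routing: the paper moves \emph{both} endpoints of each pair to adjacent positions in a common row using two passes (horizontal to distinct columns, then vertical to row $0$), whereas you move only one endpoint via three passes (down a column, across a dedicated highway row, up a column); both schemes achieve $O(1)$ depth per layer with $O(n)$-length paths, and the paper's two-pass Reorder abstraction is slightly cleaner but not materially different.
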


This result justifies the standard assumption that non-local interactions can be performed efficiently.  Simulating each of the $d$ timesteps from the CCAC circuit in \kD{2} CCNTC requires an $O(n)$ time classical computation; this can be reduced to $O(\log n)$ time if the classical controller is a parallel device or if it includes a simple classical circuit.  Since the clock speeds of classical devices are currently much faster than those of quantum devices, this overhead is not likely to be significant.

\begin{corollary}
  \label{cor:ntc-depth}
  Let $\cE$ be a quantum operation on $n$ qubits.  Let $d_1$ and $d_2$ be the minimum depths\footnote{Here, we assume that there is a minimum depth required to implement $\cE$ in CCAC when the size and width are $\poly(n)$.} required to implement $\cE$ with error at most $\eps$ using $\poly(n)$ size and $\poly(n)$ width in the CCAC and \kD{k} CCNTC models respectively where $k \geq 2$.  Then $d_1 = \Theta(d_2)$.
\end{corollary}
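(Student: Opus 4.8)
The plan is to prove the two inequalities $d_1 = O(d_2)$ and $d_2 = O(d_1)$ separately; the first is essentially immediate, and the second is a direct application of \thmref{sim-ccntc}.

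For $d_1 = O(d_2)$, I would observe that the \kD{k} CCNTC model is simply a restriction of the CCAC model: any \kD{k} CCNTC circuit is already a legal CCAC circuit with exactly the same depth, size and width, since passing to CCAC only drops the locality constraint on which pairs of qubits may interact. Hence a minimum-depth \kD{k} CCNTC implementation of $\cE$ with error at most $\eps$, depth $d_2$, and $\poly(n)$ size and width is in particular a CCAC implementation of $\cE$ with the same parameters, so $d_1 \le d_2$.

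For $d_2 = O(d_1)$, I would take a CCAC circuit $C$ witnessing $d_1$, so $C$ implements $\cE$ with error at most $\eps$, has depth $d_1$, size $s = \poly(n)$, and width $w = \poly(n)$. Applying \thmref{sim-ccntc} to $C$ (whose width is $w$) yields a \kD{2} CCNTC circuit $C'$ simulating $C$ in depth $O(d_1)$, size $O(sw) = \poly(n)$, and width $w^2 = \poly(n)$. Two points need checking here. First, the simulation of \thmref{sim-ccntc} is exact — the teleportation gadget introduces no additional error — so $C'$ still implements $\cE$ with error at most $\eps$. Second, to handle general $k \ge 2$, I would note that $\bbZ^2$ embeds as an induced subgraph of $\bbZ^k$ (fix the last $k-2$ coordinates to $0$), so $C'$ is also a legal \kD{k} CCNTC circuit; therefore $d_2 = O(d_1)$. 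Combining the two bounds gives $d_1 = \Theta(d_2)$.

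The only real subtleties — and the closest thing to an obstacle — are bookkeeping ones: verifying that the size/width blow-up in \thmref{sim-ccntc} stays polynomial when $s$ and $w$ are polynomial, and that the classical controller's per-timestep computation (noted after \thmref{sim-ccntc} to take $O(w)$ time) remains polynomial-time and so is not charged against the quantum depth. Well-definedness of $d_1$, and hence of $d_2$, is guaranteed by the footnoted assumption that $\cE$ is implementable in CCAC within $\poly(n)$ size and width.
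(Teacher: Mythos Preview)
Your proposal is correct and matches the paper's approach: the paper states that ``obviously, any circuit in the \kD{2} CCNTC model can also be applied when arbitrary interactions are allowed'' and then declares the corollary immediate from \thmref{sim-ccntc}. Your write-up is in fact more careful than the paper's, explicitly handling the $k \ge 2$ embedding, the preservation of error, and the polynomial bookkeeping on size and width.
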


It is possible to implement Shor's algorithm~\cite{shor1994a} in constant depth in CCAC~\cite{browne2010a} which implies that it can also be implemented in constant depth in \kD{2} CCNTC.

\begin{restatable}{corollary}{shorconst}
  \label{cor:shorconst}
  Shor's algorithm can be implemented in constant depth, polynomial size and polynomial width in \kD{2} CCNTC.
\end{restatable}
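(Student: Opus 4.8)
The plan is to obtain the corollary by feeding a known constant-depth CCAC implementation of Shor's algorithm into the simulation of Theorem~\ref{thm:sim-ccntc}. First I would recall the constant-depth CCAC implementation of Shor's algorithm~\cite{shor1994a} established in~\cite{browne2010a}: the quantum part of the algorithm — the order-finding subroutine built from modular exponentiation and the quantum Fourier transform — is realized by a measurement-based computation in which a polynomial-size graph (cluster) state is prepared in constant depth and then consumed by adaptive single-qubit measurements whose outcomes feed forward to choose later Clifford corrections. This is exactly the kind of classically-controlled adaptivity that the CCAC model permits, and the classical pre-processing (choosing the random base) and post-processing (the continued-fraction expansion to extract the period, and the classical reduction of factoring to order-finding) are polynomial-time classical computations that the classical controller is allowed to perform for free. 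Hence Shor's algorithm runs in CCAC with depth $O(1)$, size $\poly(n)$ and width $\poly(n)$.

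Next I would apply Theorem~\ref{thm:sim-ccntc} to this circuit, taking $d = O(1)$, $s = \poly(n)$ and width $\poly(n)$. The theorem then produces a \kD{2} CCNTC circuit of depth $O(d) = O(1)$, size $O(s\cdot\poly(n)) = \poly(n)$, and width $(\poly(n))^2 = \poly(n)$, which is precisely the statement of the corollary. I would also note that the additional per-timestep classical bookkeeping introduced by the simulation (computing the teleportation/routing schedule) is itself polynomial-time and therefore merges cleanly with the classical controller already assumed by Shor's algorithm, so the ``classical-controller'' character of the model is preserved.

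The only real obstacle is the first step: one must check that the constant-depth CCAC circuit for Shor's algorithm genuinely has polynomial size and polynomial width, and that every bit of its adaptivity is of the restricted form that the CCAC model — and hence Theorem~\ref{thm:sim-ccntc} — handles (measurement outcomes fed forward to select subsequent Clifford/Pauli operations). Once that is confirmed, the corollary follows by a direct substitution of parameters into Theorem~\ref{thm:sim-ccntc}, with no further quantum construction needed.
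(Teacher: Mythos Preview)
Your proposal is correct and follows essentially the same approach as the paper: cite the constant-depth CCAC implementation of Shor's algorithm from~\cite{browne2010a} and plug its parameters into Theorem~\ref{thm:sim-ccntc}. One small remark: your parenthetical worry that the adaptivity must be restricted to ``Clifford/Pauli'' feed-forward is unnecessary here, since the paper's CCAC model (Definition~\ref{defn:ccac}) allows the classical controller to output \emph{any} disjoint basic operations at each step, so the measurement-based Shor circuit fits without further checking.
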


Since controlled-$U$ operations and fanouts can also be performed in constant depth and polynomial width in CCAC~\cite{hoyer2005a,browne2010a,takahashi2011a}, we also have the following corollary.

\begin{restatable}{corollary}{conUfanout}
  \label{cor:conU-fanout}
  Controlled-$U$ operations with $n$ controls and fanouts with $n$ targets can be implemented in constant depth, $\poly(n)$ size and $\poly(n)$ width in \kD{2} CCNTC.
\end{restatable}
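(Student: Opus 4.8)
The plan is to obtain this corollary immediately by composing the known constant-depth implementations of $n$-control controlled-$U$ gates and $n$-target fanouts in the abstract classical-controller model with \thmref{sim-ccntc}. The key point is that these are precisely the operations for which the measurement-based and adaptive constructions of \cite{hoyer2005a,browne2010a,takahashi2011a} yield circuits of depth $O(1)$ using only polynomially many ancilla qubits; because those constructions rely on intermediate measurements together with classically-controlled Pauli corrections, they are legitimate CCAC circuits, so \thmref{sim-ccntc} applies to them directly.

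Concretely, I would first record that there is a CCAC circuit $F_n$ implementing the fanout on $n+1$ qubits with depth $d_F = O(1)$, size $s_F = \poly(n)$, and width $w_F = \poly(n)$ (the ancillas being used to prepare and consume the cat states underlying the fanout), and similarly a CCAC circuit $G_n$ implementing controlled-$U$ with $n$ controls (for a fixed unitary $U$ on $O(1)$ qubits) of depth $O(1)$, size $\poly(n)$, and width $\poly(n)$, via the standard reduction of controlled-$U$ to parity/fanout subroutines. Applying \thmref{sim-ccntc} to $F_n$ and $G_n$ turns a CCAC circuit of depth $d$, size $s$, width $w$ into a \kD{2} CCNTC circuit of depth $O(d)$, size $O(sw)$, and width $w^2$; substituting $d = O(1)$, $s = \poly(n)$, $w = \poly(n)$ gives a \kD{2} CCNTC circuit of depth $O(1)$, size $\poly(n)$, and width $\poly(n)$, which is exactly the claim. (The analogous statement for \kD{k} CCNTC with $k \ge 2$ follows since a \kD{2} grid embeds into a \kD{k} grid, or directly from \corref{ntc-depth}.)

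The only real care required — and the closest thing to an obstacle — is the bookkeeping: verifying that the cited upper bounds are genuinely stated for a model with a classical controller and mid-circuit measurements (hence are CCAC circuits in our sense), that their ancilla usage is polynomial rather than merely finite in $n$, and that the convention for controlled-$U$ fixes $U$ as a unitary on $O(1)$ qubits so that "$n$ controls" is meaningful. Once this is checked, no new circuit-construction ideas beyond \thmref{sim-ccntc} are needed, and the proof is a one-line composition.
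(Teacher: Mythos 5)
Your proposal is correct and matches the paper's argument: the paper likewise cites the constant-depth, polynomial-size, polynomial-width CCAC constructions for controlled-$U$ and fanout~\cite{hoyer2005a,browne2010a,takahashi2011a} and applies \thmref{sim-ccntc} to transfer them to \kD{2} CCNTC. Your extra bookkeeping about the cited bounds being legitimate CCAC circuits is sensible but not a departure from the paper's route.
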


Our main technical result allows any subset of qubits to be reordered in constant depth.  \thmref{sim-ccntc} follows from this as a corollary.

\begin{restatable}{theorem}{inject}
  \label{thm:inject}
  Suppose we have an $n \times n$ grid where all qubits except those in the first column are in the state $\ket{0}$.  Let $T \subseteq \{0, \ldots, n - 1\}$ and let $\pi : T \ra \{0, \ldots, n - 1\}$ be an injection such that for all $j \in T$ with $\pi(j) = 0$, $\setb{k \in T^c}{k < j} = \emptyset$.  Set $m = \abs{\setb{j \in T}{\pi(j) \not= 0}}$. Then we can move each qubit at $(0, j)$ to $(\pi(j), 0)$ for all $j \in T$ in $O(1)$ depth, $O(m n)$ size and $(m + 1) n \leq n^2$ width in \kD{2} CCNTC.
\end{restatable}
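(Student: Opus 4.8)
The plan is to turn the lemma into a routing problem and solve it with constant-depth teleportation chains. The one primitive needed is this: if $q_0, q_1, \ldots, q_\ell$ are qubits lying along a path of grid edges, $q_0$ holds an arbitrary state, and $q_1, \ldots, q_\ell$ are all in $\ket 0$, then the state of $q_0$ can be moved to $q_\ell$ in $O(1)$ depth and $O(\ell)$ size using only nearest-neighbor operations: prepare Bell pairs along alternating edges, Bell-measure the remaining edges, and apply at $q_\ell$ the Pauli correction dictated by the outcomes. The $O(n)$-time classical computation of that correction is free in the CCNTC model, and the internal qubits are left in classically-known computational-basis states. With this primitive, the task is to produce, for each $j \in T$, a path from $(0,j)$ to $(\pi(j),0)$ so that the paths used at the same time are vertex-disjoint and their internal vertices are in $\ket 0$ at the moment they are used.

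I would route in two phases with L-shaped paths. Let $j$ range over the $m$ indices of $T$ with $\pi(j)\neq 0$. In phase~1, teleport the qubit at $(0,j)$ rightward along row $j$ to $(\pi(j),j)$: these paths lie in distinct rows, hence are disjoint, and their internal vertices — cells of row $j$ strictly between columns $0$ and $\pi(j)$ — are in $\ket 0$ because only column $0$ initially carries data. Then insert one layer of single-qubit gates resetting every spent qubit to $\ket 0$ (apply $X$ where the known outcome was $1$). In phase~2, teleport the qubit at $(\pi(j),j)$ upward along column $\pi(j)$ to $(\pi(j),0)$: since $\pi$ is injective these lie in distinct columns, hence are disjoint, and their internal vertices, lying in a column $\pi(j)\geq 1$, are in $\ket 0$ after the reset. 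Finally, the unique $j^\star\in T$ with $\pi(j^\star)=0$ (if it exists and is nonzero) is teleported up column $0$ from $(0,j^\star)$ to $(0,0)$; its internal vertices are $(0,0),\ldots,(0,j^\star-1)$, which are exactly the sources of rows below $j^\star$, all lying in $T$ by the hypothesis $\{k\in T^c: k<j^\star\}=\emptyset$, hence emptied in phase~1 and reset — this is precisely what that hypothesis is for.

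Correctness is then a matter of tracking which cells hold data: after phase~1 each moved qubit sits alone in its target column, the distinct-rows / distinct-columns / injectivity observations rule out collisions, and the construction never touches a cell $(0,k)$ with $k\in T^c$, so the qubits that must stay put are preserved. Depth is $O(1)$: a constant number of layers per phase plus the reset. Size is $O(mn)$: there are $O(m)$ chains, each of length $O(n)$ and cost $O(n)$ gates, plus $O(mn)$ reset gates. For the width I would charge each chain to the grid lines it occupies — the vertical chains to the $m$ distinct columns $\{\pi(j): j\in T,\ \pi(j)\neq 0\}$ plus column $0$, and the horizontal chains to cells already contained in those columns — to reach the claimed $(m+1)n\leq n^2$.

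The hard part is exactly this last accounting: pinning the width to $(m+1)n$ rather than a cruder $O(mn)$. A naively independent choice of L-paths can reuse columns wastefully and overshoot, so I expect the real work is in choosing the turn points carefully — for instance processing sources in order of target column, or reflecting the grid so that the ``far'' sources lie near row~$0$ — so that the union of all ancillas is contained in $m+1$ full lines. Everything else, namely the teleportation primitive, the collision bookkeeping, and the use of the $\pi^{-1}(0)$ hypothesis, is routine once the routing is fixed.
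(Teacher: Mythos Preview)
Your approach is exactly the paper's: two rounds of teleportation chains forming L-shaped paths, horizontal along row $j$ to column $\pi(j)$, then vertical down to row $0$, with the injectivity of $\pi$ guaranteeing column-disjointness and the hypothesis on $\pi^{-1}(0)$ clearing column $0$ below $j^\star$. The paper packages this as the two-line procedure \textsc{Reorder} (Algorithm~\ref{alg:inject}) and simply asserts that the theorem ``follows immediately'' from it.

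Where you diverge is only in expectations: you anticipate that nailing the width to $(m+1)n$ will require ``choosing the turn points carefully'' and is ``the hard part''. The paper does no such work --- it uses the naive L-paths with turns at $(\pi(j),j)$ and states the width bound without further argument. So you are not missing a trick; if anything you are looking for one that is not there. Your instinct that a straightforward union of the L-paths does not obviously fit inside $(m+1)n$ cells is reasonable (the horizontal segments alone already sweep up to $m$ full rows), but the paper does not address this and only ever uses the cruder consequence $n^2$ downstream in Theorem~\ref{thm:sim-ccntc}. Drop that last paragraph of hedging and your write-up is a more careful version of the paper's own argument, including the explicit reset layer and the treatment of the $\pi(j^\star)=0$ case that the paper leaves implicit.
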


Upper bounds for the depth of quantum circuits when converting between various architectures with no classical controller were previously studied by Cheung, Maslov and Severini~\cite{cheung2007a}.  Their results imply that CCAC can be simulated in \kD{k} CCNTC with $O(\sqrt[k]{n})$ factor depth overhead, $O(n)$ size overhead and no width overhead.  In contrast to our results, their techniques are based on applying swap gates to move the interacting qubits next to each other and do not perform any measurements.

Implementations of Shor's algorithm in \kD{k} CCNTC with various super-constant depths were previously known for $k = 1$ and $k = 2$.  Fowler, Devitt and Hollenberg~\cite{fowler2004a} showed a \kD{1} CCNTC circuit for Shor's algorithm which requires $O(n^3)$ depth, $O(n^4)$ size and $O(n)$ width where $n$ is the number of bits in the integer which is being factored.  Maslov~\cite{maslov2007a} showed that any stabilizer circuit can be implemented in linear depth in \kD{1} CCNTC from which the result of Fowler, Devitt and Hollenberg~\cite{fowler2004a} can be recovered.  Kutin~\cite{kutin2006a} gave a more efficient \kD{1} CCNTC circuit which uses $O(n^2)$ depth, $O(n^3)$ size and $O(n)$ width.  For \kD{2} CCNTC, Pham and Svore~\cite{pham2012a} showed an implementation of Shor's algorithm in polylogarithmic depth, polynomial size and polynomial width.

It was also previously known that controlled-$U$ operations and fanouts can be implemented in constant depth, polynomial size and polynomial width in CCAC.  This line of work was started by Moore~\cite{moore1999a} who showed that parity and fanout are equivalent and posed the question of whether fanout has constant-depth circuits.  H{\o}yer and {\v S}palek~\cite{hoyer2005a} proved that if fanout has constant-depth circuits then controlled-$U$ operations can also be implemented in constant depth with inverse polynomial error.  Browne, Kashefi and Predrix~\cite{browne2010a} showed that one-way quantum computation is equivalent to unitary quantum circuits with fanout.  A consequence of this is that constant depth adaptive circuits for fanout can be used to implement controlled-$U$ operations in constant depth in CCAC.  Takahashi and Tani~\cite{takahashi2011a} reduced the size of this circuit by a polynomial and made it exact.

In many technologies, measurements are much more costly than unitary operations.  For this reason, we also consider the non-adaptive \kD{k} NANTC model.  Here, there is no classical controller and the operations applied depend only on the size of the input and not on intermediate measurement outcomes.  Our result in this model is a characterization of the complexity of controlled-$U$ operations and fanouts.

\begin{restatable}{theorem}{nantcopt}
  \label{thm:nantc-opt}
  The depth required for controlled-$U$ operations with $n$ controls and fanouts with $n$ targets in \kD{k} NANTC is $\Theta(\sqrt[k]{n})$.  Moreover, this depth can be achieved with size $\Theta(n)$ and width $\Theta(n)$.
\end{restatable}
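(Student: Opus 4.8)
The plan is to prove matching upper and lower bounds on the depth, treating the two directions separately. For the \emph{upper bound}, the goal is a \kD{k} NANTC circuit of depth $O(\sqrt[k]{n})$, size $O(n)$ and width $O(n)$ for fanout with $n$ targets (and then for controlled-$U$, either directly or by a constant-depth reduction of the Hřyer--Špalek / Browne--Kashefi--Predrix type, being careful that those reductions are non-adaptive or can be made so). First I would handle $k=1$: on a line, arrange the control at one end and the $n$ targets along the path, and propagate the copy of the control bit down the line using a logarithmic-depth ``doubling'' pattern of nearest-neighbor CNOTs --- but since we are on a path this still costs $\Theta(n)$ depth, which matches $\sqrt[1]{n}=n$. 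For general $k$, embed a balanced spanning tree of the $\sqrt[k]{n}\times\cdots\times\sqrt[k]{n}$ grid whose depth (longest root-to-leaf path) is $O(\sqrt[k]{n})$; fanout is then executed by sweeping the copy operation outward along the tree, each grid-edge CNOT used $O(1)$ times, giving depth $O(\sqrt[k]{n})$, one qubit per grid point so width $O(n)$, and $O(n)$ total gates. Constant-depth pieces (Hadamards, the uncomputation of ancillae, the $U$-insertion for controlled-$U$) do not affect the asymptotics.

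For the \emph{lower bound}, the key observation is a light-cone / information-propagation argument that is architecture-independent in spirit but uses the grid metric. In \kD{k} NANTC, after $t$ layers any single qubit can only have interacted with qubits within graph-distance $t$ on $\bbZ^k$, i.e.\ with at most $O(t^k)$ other qubits. For an $n$-target fanout, the value on the control qubit must be correlated with (indeed, influence) all $n$ target qubits; equivalently, flipping an input must be detectable at a location that the forward light cone must reach. If the control qubit and the targets are placed so that some target is at grid-distance $\Omega(\sqrt[k]{n})$ from the control --- which is forced once $n$ qubits occupy the grid, since a ball of radius $r$ holds only $O(r^k)$ points --- then we need $t = \Omega(\sqrt[k]{n})$ layers for information to traverse that distance. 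I would make this rigorous by a standard hybrid/causality argument: construct two input states differing only on the control, note that the reduced state on a far target qubit is identical after $t < c\sqrt[k]{n}$ layers, yet fanout requires these reduced states to differ, a contradiction. The paper remarks the lower bound applies to ``a more general class of operations,'' so I would phrase it for any operation in which some output qubit must depend on an input qubit placed $\Omega(\sqrt[k]{n})$ away.

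The main obstacle I expect is making the lower bound fully robust: handling the placement of qubits (the adversary must be allowed to choose an input encoding, or we argue for the worst-case/any fixed placement since the circuit is non-adaptive and fixed in advance), handling ancilla qubits and intermediate measurements in the non-adaptive model (measurements with fixed, outcome-independent subsequent operations still obey the light-cone bound, but one must confirm classical post-processing is either disallowed or also local), and ensuring the ``must depend on'' statement is quantitatively strong enough to force a distinguishing pair of states rather than merely a statistical nudge. A secondary nuisance is the controlled-$U$ upper bound: one must check that the known constant-depth CCAC reductions to fanout can be carried out non-adaptively, or instead give a direct tree-based construction for controlled-$U$; I would prefer the direct construction to avoid adaptivity issues, using the tree to distribute $\log$-many ancillae, applying $U$ controlled on the root copy, and uncomputing, all in depth $O(\sqrt[k]{n})$.
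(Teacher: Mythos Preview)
Your lower bound is essentially the paper's: a light-cone argument showing that after $t$ non-adaptive nearest-neighbor layers, a qubit can only influence others within $\ell_1$-distance $t$, combined with a packing observation that $\Omega(n)$ of the relevant qubits must sit at distance $\Omega(\sqrt[k]{n})$ from the distinguished one. The paper packages this via a notion of ``$\eps$-sensitive'' operations (Definitions~\ref{defn:input-sens}--\ref{defn:output-sens}) and a formal ``active operation / influences'' recursion, but the content is the hybrid you sketch.

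For the upper bound your route differs from the paper's. The paper gives a \emph{direct} construction for controlled-$U$ first (\algref{control-2D}): lay the controls on concentric $\ell_\infty$-rings of an $m\times\cdots\times m$ grid and, ring by ring, Toffoli the control values inward, each ring in $O(1)$ depth, until a $3\times 3$ core remains; then uncompute. Fanout is obtained afterward by literally reversing the gate sequence and replacing Toffolis by small fanouts (\secref{fanout}). Your spanning-tree idea---fan out along a depth-$O(\sqrt[k]{n})$ tree embedded in the grid, or for controlled-$U$ compute ANDs up the same tree---is a valid alternative and arguably cleaner to state; the ring construction is just a particular, very explicit such tree. One advantage of the paper's order (controlled-$U$ first, fanout by reversal) is that it sidesteps exactly the adaptivity worry you raise about the H{\o}yer--{\v S}palek / Browne--Kashefi--Perdrix reductions.

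One point to tighten: your sketch of the direct controlled-$U$ (``distribute log-many ancillae, apply $U$ controlled on the root copy'') reads as if you are still thinking in the fanout direction. For $n$ controls you need to compute the AND \emph{up} the tree, which uses $\Theta(n)$ ancillae (one per internal node), not $O(\log n)$; then a single $\CU{U}$ at the root, then uncompute. With that correction the tree construction matches the paper's bounds.
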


If the clock speeds of the quantum computer and its classical controller are comparable, then operations implemented using \thmref{nantc-opt} are significantly faster than those implemented using \corref{conU-fanout}.  For this reason, \thmref{nantc-opt} may become a better option as quantum computing technology matures.

The layout of our paper is as follows.  In \secref{def}, we discuss definitions used in the rest of the paper and define the models of computation precisely.  In \secref{teleport}, we review quantum teleportation and describe teleportation chains.  In \secref{depth-ccntc}, we describe our \kD{2} teleportation scheme and show that it allows arbitrary interactions to be implemented in constant depth in \kD{2} CCNTC.  In \secref{control}, we show an algorithm that implements controlled-$U$ operations and fanouts for \kD{k} NANTC in depth $O(\sqrt[k]{n})$.  In \secref{fanout}, we describe how our techniques can be applied to obtain \kD{k} NANTC quantum circuits for fanout with depth $O(\sqrt[k]{n})$.  In \secref{optimal}, we prove a matching lower bound for a class of operations that includes controlled-$U$ operations and fanouts.

\section{Definitions}
\label{sec:def}
The one- and two-qubit operations that can be performed by the hardware are called the \emph{basic operations}.  We assume that the basic operations are a \emph{universal gate set} so that any one- or two-qubit unitary can be constructed from the basic operations.  We also assume that the basic operations include measurement in the computational basis.

It is useful to distinguish between physical and logical timesteps.  During each \emph{physical timestep}, we can perform any set of disjoint basic operations.  During a \emph{logical timestep}, we allow any set of disjoint $t$-qubit operations to be performed.  In this work, we take $t = O(k)$ and assume $k$ is constant.

\begin{definition}[NANTC]
  \label{defn:nantc}
  In the \kD{k} NANTC model, computation is performed by applying a sequence of sets of basic operations $S_1, \ldots, S_d$ to the \kD{k} grid of qubits.  We require that the operations in the set $S_i$ are disjoint and are either single-qubit operations or two-qubit operations between neighbors in the \kD{k} grid.  The sequence of sets of operations must be randomized polynomial-time computable from the size $n$ of the input.
\end{definition}

In the models where a classical controller is present, the classical controller is invoked after each physical timestep to determine which operations to apply at the next step.

\begin{definition}[CCAC]
  \label{defn:ccac}
  Let $M$ be a randomized polynomial-time machine that takes the input $x$ and the measurement outcomes from the first $i$ physical timesteps and outputs a set $M_1, \ldots, M_\ell$ of disjoint basic operations to be applied to the qubits at the \nth{i + 1} physical timestep.  If no more physical timesteps are to be performed, then $M$ outputs the special symbol $\boxdot$.  Computation in the CCAC model is performed at physical timestep $i$ by using $M$ to compute the set of operations to apply and then applying them to the qubits.
\end{definition}

The CCNTC model is similar except that it also requires that two-qubit operations are only performed between neighbors on the \kD{k} grid.

\begin{definition}[CCNTC]
  \label{defn:ccntc}
  Let $M$ be a randomized polynomial-time machine that takes the input $x$ and the measurement outcomes from the first $i$ physical timesteps and outputs a set $M_1, \ldots, M_\ell$ of disjoint basic operations to be applied to the \kD{k} grid of qubits at the \nth{i + 1} physical timestep.  We require that each $M_i$ is either a single-qubit operation or a two-qubit operation between neighbors in the \kD{k} grid.  If no more physical timesteps are to be performed, then $M$ outputs the special symbol $\boxdot$.  Computation in the CCNTC model is performed at physical timestep $i$ by using $M$ to compute the set of operations to apply and then applying them to the \kD{k} grid of qubits.
\end{definition}

In this paper, the machine $M$ from Definitions~\ref{defn:ccac} and \ref{defn:ccntc} will be deterministic except for the pre- and post-processing stages of Shor's algorithm.

For NANTC, a \emph{quantum circuit} is the sequence of basic operations $M_1, \ldots, M_\ell$ be applied to the \kD{k} grid of qubits.  For the CCAC and CCNTC models, a \emph{quantum circuit} is described by the machine $M$ from Definitions~\ref{defn:ccac} and \ref{defn:ccntc}.  We now define three standard measures of cost in these models.

\begin{definition}
  The depth of a quantum circuit is

  \begin{enumerate}
  \item $d$ for NANTC where $S_1, \ldots, S_d$ is the sequence of operations from \defref{nantc} for an input of size $n$
  \item $\max_{x \in \B^n} \max_r d_{x, r}$ for CCAC and CCNTC where $d_{x, r}$ is the number of physical timesteps it takes for the machine $M$ from Definitions~\ref{defn:ccac} and \ref{defn:ccntc} to output $\boxdot$ when the input is $x$ and the random seed is $r$.  The first max is taken is over all possible inputs $x$ of length $n$ and the second is over all possible random seeds $r$.
  \end{enumerate}
\end{definition}

We note that the depth only changes by a constant factor if we use logical timesteps instead of physical timesteps in the above definition.  This is due to our assumption that any operation performed in a logical timestep acts on at most $O(k) = O(1)$ qubits.

\begin{definition}
  The size of a quantum circuit is
  \begin{enumerate}
  \item $\sum_i \abs{S_i}$ for NANTC where $S_1, \ldots, S_d$ is the sequence of operations from \defref{nantc} for an input of size $n$
  \item $\max_{x \in \B^n} \max_r s_{x, r}$ for CCAC and CCNTC where $S_{x, r}$ is the total number of operations applied when the input is $x$ and the random seed is $r$.  The first max is taken over all possible inputs $x$ of length $n$ and the second is over all possible random seeds $r$.
  \end{enumerate}
\end{definition}

In the next definition, we assume that the qubits are indexed by $\bbN$ for CCAC.

\begin{definition}
  The width of a quantum circuit is
  \begin{enumerate}
  \item the total number of qubits acted on by operations in the sets $S_i$ for NANTC where $S_1, \ldots, S_d$ is the sequence of operations from \defref{nantc} for an input of size $n$
  \item $\max_{x \in \B^n} \abs{A_x}$ for CCAC where $A_x$ is the smallest subset of $\bbN$  such that every qubit acted on is contained in $A_x$ for input $x$ and all random seeds $r$
  \item $\max_{x \in \B^n} \abs{A_x}$ for CCNTC where $A_x$ is the smallest hypercube in $\bbZ^k$ such that every qubit acted on is contained in $A_x$ for input $x$ and all random seeds $r$
  \end{enumerate}
\end{definition}

Typically, the depth is the most important metric to optimize since it is proportional to the amount of time required to execute the quantum operations.  The width is also important since the number of qubits is currently quite limited but the size is largely irrelevant.  Moreover, if parallelism is properly exploited then we expect the size to be roughly the depth times the width.

\section{Quantum teleportation}
\label{sec:teleport}
In this section we review quantum teleportation~\cite{bennett1993a}.  As we shall see, teleportation is a useful primitive that allows non-local interactions to be performed in a constant-depth circuit in \kD{k} CCNTC.  Let us denote the states of the Bell basis by $\ket{\Phi_0} = \frac{\ket{00} + \ket{11}}{\sqrt{2}}$, $\ket{\Phi_1} = \frac{\ket{01} + \ket{10}}{\sqrt{2}}$, $\ket{\Phi_2} = \frac{\ket{01} - \ket{10}}{\sqrt{2}}$ and $\ket{\Phi_3} = \frac{\ket{00} - \ket{11}}{\sqrt{2}}$.  Up to global phase, these can be written as $\ket{\Phi_\ell}^{AB} = \sigma_\ell^B \ket{\Phi_0}^{AB}$.  Recall that in the quantum teleportation setting, Alice has a state $\ket{\psi}^S = \alpha \ket{0}^S + \beta \ket{1}^S$ that she wishes to send to Bob.  The two parties are not allowed to send quantum states to each other but each have one qubit of a Bell state $\sigma_\ell^B \ket{\Phi_0}$ and can communicate classically.

To perform quantum teleportation, Alice performs a Bell measurement on the $SA$ registers.  If the measurement outcome is $\ket{\Phi_k}$, then a simple calculation shows that the resulting state is $\ket{\Phi_k}^{SA} \otimes \sigma_\ell \sigma_k \ket{\psi}^B$.  Alice then sends the classical measurement outcome $k$ to Bob; by applying the appropriate Pauli operation to his register $B$, Bob causes to overall state to become $\ket{\Phi_k}^{SA} \otimes \ket{\psi}^B$.  Observe that Alice's state $\ket{\psi}$ has been recovered in Bob's register.

Let us now consider how quantum teleportation chains can be used in the \kD{1} CCNTC model to perform non-local operations in constant depth.  Suppose that we have a qubit in the state $\ket{\psi}^S$ along with $m$ Bell states $\ket{\Phi_{\ell_j}}^{A_j B_j}$.  These are arranged on a line so that the overall state is $\ket{\psi}^S \bigotimes_{j = 1}^m \ket{\Phi_{\ell_j}}^{A_j B_j}$.  Our goal is to move qubit $S$ to $B_m$.  One way to do this is to first teleport $S$ to $B_1$ by performing a Bell measurement on $S A_1$.  We then store the measurement outcome $k_1$ but do not apply the correcting Pauli operation; at this point, the state of $B_1$ is $\sigma_{\ell_1} \sigma_{k_1} \ket{\psi}$.  Continuing this process, we obtain the state $\bigotimes_{j = 1}^m \ket{\Phi_{k_j}} \prod_{j=m}^1 \left(\sigma_{\ell_j} \sigma_{k_j}\right) \ket{\psi}^{B_m}$.  Since $\prod_{j=m}^1 \left(\sigma_{\ell_j} \sigma_{k_j}\right)$ is just a Pauli operation, we obtain the state $\bigotimes_{j = 1}^m \ket{\Phi_{k_j}} \ket{\psi}^{B_m}$ in a single quantum operation.  The crucial point here is that all of the Bell measurements are performed on disjoint pairs of qubits so they can all be done in parallel as in one-way quantum computation~\cite{raussendorf2001a,raussendorf2002a} and~\cite{terhal2002a}.  Thus, we can perform a non-local interaction of arbitrary distance in constant depth.  It is important to note that this is not possible without a classical controller since otherwise there is no way to compute the correcting Pauli operation.

\newcommand{\aiwccsevenzero}{
  \begin{figure}[H]
    \centering
    \noindent\makebox[\textwidth]{ 
      \subfloat[][]{
        \label{fig:aiwcci-7}
        \includegraphics{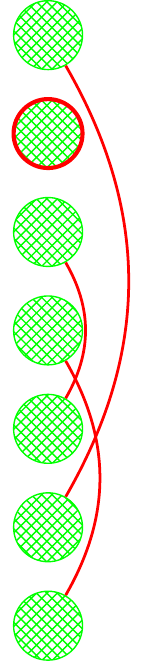}}
      \subfloat[][]{
        \label{fig:aiwcc-7-0}
        \includegraphics{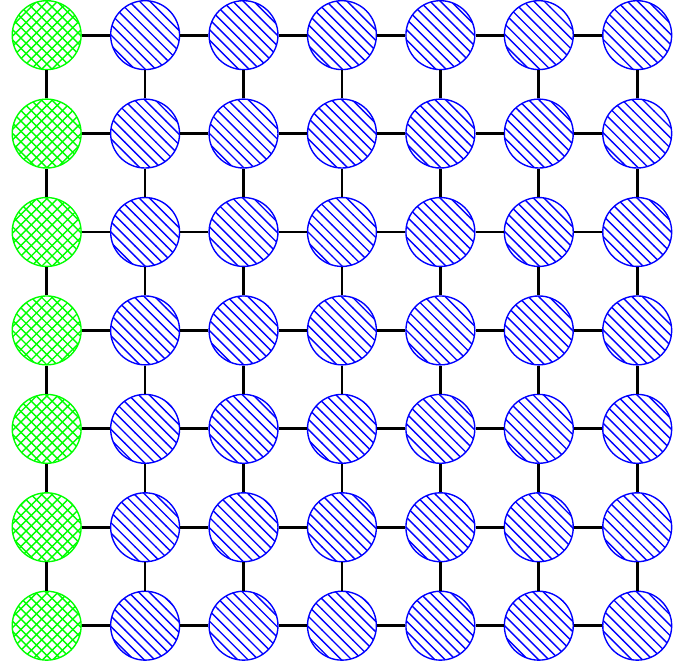}}
      \subfloat[][]{
        \label{fig:aiwcc-7-1}
        \includegraphics{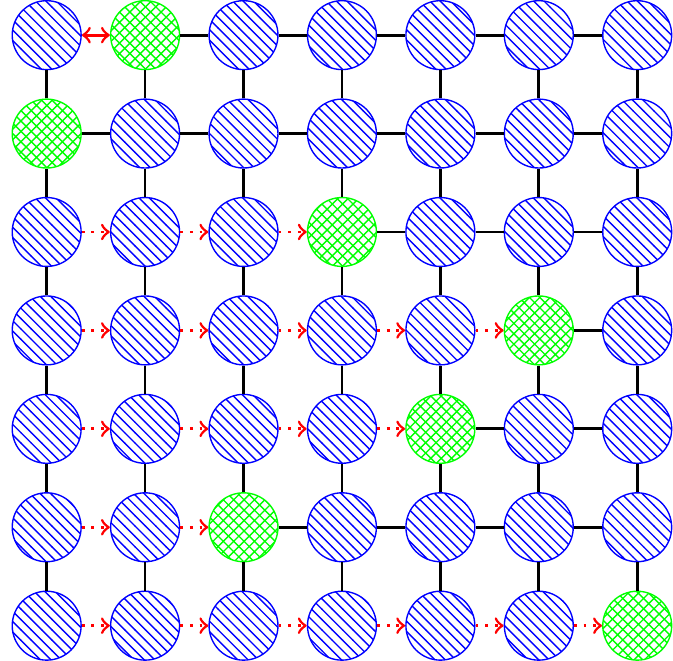}}}
    \caption{Performing an arbitrary set of interactions in \kD{2} CCNTC.  The qubits crosshatched green are the data qubits and the qubits shaded with diagonal downward blue lines are ancilla qubits.}
    \label{fig:aiwcc-7}
  \end{figure}
}

\section{Depth complexity in \kD{k} CCNTC}
\label{sec:depth-ccntc}
In this section, we show that an arbitrary set of CCAC interactions corresponding to basic operations can be performed in constant depth in \kD{2} CCNTC.  We assume that there are $n$ qubits on which the interactions are to be performed and store these in the first column of a \kD{2} $n \times n$ CCNTC grid.  The qubit at location $(i, j)$ is denoted by $q_{i, j}$.  Since we must handle interactions between qubits that are not neighbors, we may as well assume that the original $n$ qubits are stored in the first column $q_{0, 0}, \ldots, q_{0, n - 1}$ of qubits.  The remaining columns are used as ancillas to implement teleportation chains.  We teleport each of the $n$ qubits horizontally to the right so that interacting pairs are in adjacent columns.  Since these teleportations are on disjoint sets of qubits, they can be performed in parallel as in~\cite{raussendorf2001a,raussendorf2002a,terhal2002a}.  A second set of vertical teleportation chains is then used to move all the qubits down to the first row.  At this point, the interacting qubits are neighbors so the interactions may be implemented directly.  We then perform the reverse teleportations to move the qubits back to their original positions.

\newcommand{\aiwccseventwo}{
  \begin{figure}[H]
    \ContinuedFloat
    \centering
    \subfloat[][]{
      \label{fig:aiwcc-7-2}
      \includegraphics{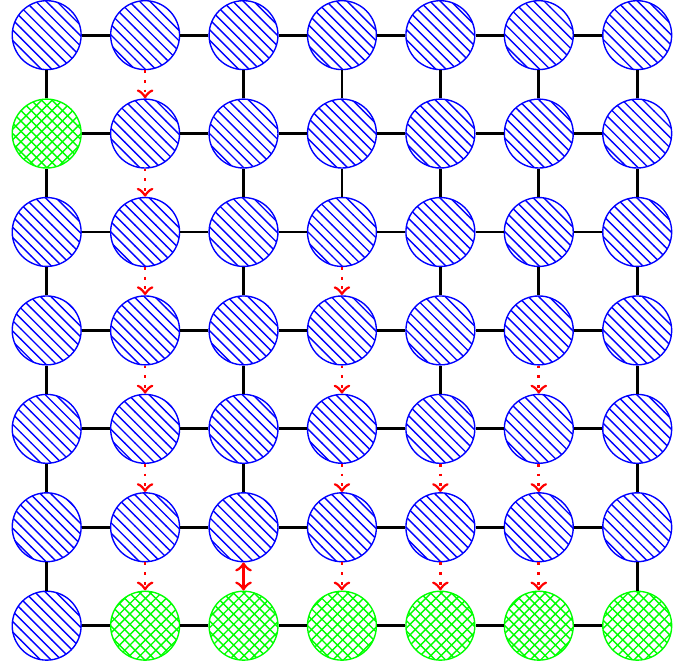}}
    \subfloat[][]{
      \label{fig:aiwcc-7-3}
      \includegraphics{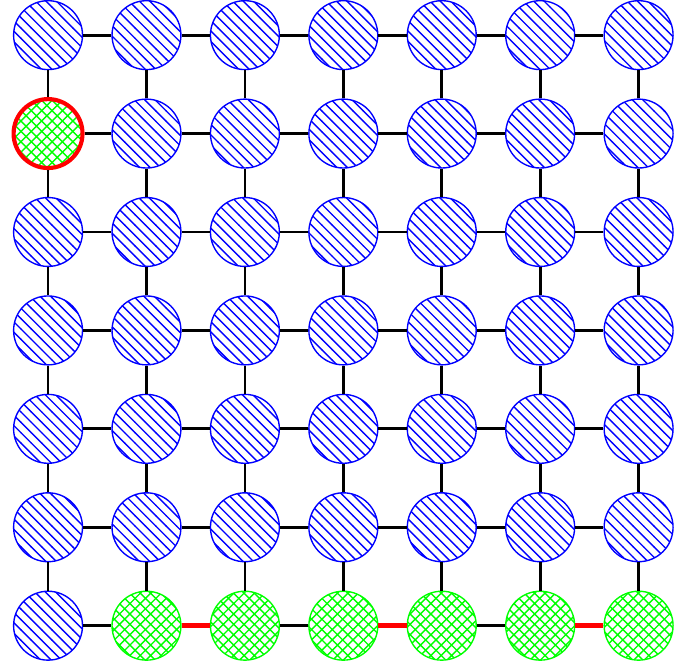}}
    \caption[]{Performing an arbitrary set of interactions in \kD{2} CCNTC}
  \end{figure}
  \resetsubfig
}

\subsection{An example of arbitrary interactions in \kD{2} CCNTC}
We show an example in \figref{aiwcc-7}.  The desired interactions are shown in \figref{aiwcci-7}.  The layout of the data qubits in the \kD{2} grid is shown in \figref{aiwcc-7-0}; the ancilla qubits are used to implement the teleportation chains and are initially set to $\ket{0}$.  We start by horizontally teleporting the qubits that interact to adjacent columns in \figref{aiwcc-7-1} where the teleportation chains are denoted by the dotted red arrows.  The red double arrow indicates a swap operation; this is just a less expensive way of achieving the same result when the qubits are neighbors. The next step is to vertically teleport the data qubits down to the first row as shown in \figref{aiwcc-7-2}.  Finally, all interacting qubits are now neighbors so we perform the desired interactions in \figref{aiwcc-7-3}.  The final reverse teleportations are not shown but can be obtained by reversing the arrows in Figures~\ref{fig:aiwcc-7-1} and \ref{fig:aiwcc-7-2}.

\notintqc{\aiwccsevenzero}
\notintqc{\aiwccseventwo}

\subsection{An algorithm for performing arbitrary interactions in \kD{2} CCNTC}
In order to define our algorithm, we first show how to perform an arbitrary reordering of the positions of the qubits in constant depth.  We assume that there are $n$ data qubits which are located in the first column of the $n \times n$ grid; the remaining qubits are in the state $\ket{0}$.  We let $T \subseteq \{0, \ldots, n - 1\}$ be a subset of row indexes on which an injection $\pi : T \ra \{0, \ldots, n - 1\}$ is to be applied.  This injection describes where the qubits with row indexes in $T$ are to be moved to on the $x$-axis.  The reason we specify $T$ explicitly is because this allows us to only perform teleportations on qubits which have row indexes in $T$.  If $\abs{T} = o(n)$ then this can result in a circuit that has asymptotically smaller size.  The reordering can be applied using \algref{inject} which is based on the same technique as \figref{aiwcc-7}.  The notation $\teleport(q_{i_1, j_1}, q_{i_2, j_2})$ where $i_1 = i_2$ or $j_1 = j_2$ means that a teleportation chain is applied to move the state of qubit at $(i_1, j_1)$ along the line to $(i_2, j_2)$.

\newcommand{\injectalg}{
  \begin{algorithm}[H]
    \centering
    \begin{algorithmic}[1]
       \Require{The $n$ data qubits are in the first column, $T \subseteq \{0, \ldots, n - 1\}$ and $\pi : T \ra \{0, \ldots, n - 1\}$ is an injection.  For all $j \in T$ such that $\pi(j) = 0$, $\setb{k \in T^c}{k < j} = \emptyset$}
       \Ensure{Each qubit at $(0, j)$ is moved to $(\pi(j), 0)$ for all $j \in T$}
       \Function{\reorder}{$T$, $\pi$}
         \For{$j \in T$}
           \State $\teleport(q_{0, j}, q_{\pi(j), j})$
         \EndFor
         \For{$j \in T$}
           \State $\teleport(q_{\pi(j), j}, q_{\pi(j), 0})$
         \EndFor
       \EndFunction
    \end{algorithmic}
    \caption{The algorithm for performing an arbitrary reordering of a subset of the qubits in \kD{2} CCNTC}
    \label{alg:inject}
  \end{algorithm}
}

\notintqc{\injectalg}

\intqc{\aiwccsevenzero}
\intqc{\aiwccseventwo}

Our main technical result follows immediately from \algref{inject}.

\inject*

We note that the $\teleport$ operations in \algref{inject} require an $O(n)$ time classical computation to determine the correcting Pauli matrix (see \secref{teleport}).  Since this computation simply involves multiplying $O(n)$ Pauli matrices, it can be done more efficiently in $O(\log n)$ time by arranging the multiplications in a binary tree.  The $O(\log n)$ runtime requires either that the classical controller is a parallel device or that it includes a special classical circuit for computing the correcting Pauli operation.  Since classical operations are much faster than quantum operations on current devices, this overhead is unlikely to be a problem.

It is now straightforward to describe the algorithm for performing arbitrary interactions.  We first note that an arbitrary set of interactions can be defined by disjoint one and two element subsets $J_k$ of $\{0, \ldots, n - 1\}$ and basic operations $M_k$ where $1 \leq k \leq \ell$ and the values in $J_k$ denote the qubits on which the operation $M_k$ is to be applied.  The pseudocode for performing arbitrary interactions in \kD{2} CCNTC is shown in \algref{aiwcc}.

\newcommand{\aiwcc}{
  \begin{algorithm}[H]
    \centering
    \begin{algorithmic}[1]
      \Require{The $n$ data qubits are in the first column, each $J_k$ is a disjoint one or two element subset of $\{0, \ldots, n - 1\}$ and $M_k$ is a basic operation for $1 \leq k \leq \ell$.  Moreover, $\abs{J_{k_1}} \leq \abs{J_{k_2}}$ for $k_1 \leq k_2$}
      \Ensure{The interactions specified by $J_k$ and $M_k$ are applied}
      \Function{\interact}{$J_1, \ldots, J_\ell$, $M_1, \ldots, M_\ell$}
        \State $T \coleq ()$
        \State $i \coleq 0$
        \For{$k \coleq 1, \ldots, \ell$}
          \If{$\abs{J_k} = 1$}
            \State $i \coleq 1$
          \Else
            \State $\{j_1, j_2\} \coleq J_k$ where $j_1 < j_2$
            \State $\pi(j_1) \coleq i$
            \State $\pi(j_2) \coleq i + 1$
            \State Append the elements of $J_k$ to $T$
            \State $i \coleq i + 2$
          \EndIf
        \EndFor
        \State $\reorder(T, \pi)$
        \State $i \coleq 0$
        \For{$k \coleq 1, \ldots, \ell$}
          \If{$\abs{J_k} = 1$}
            \State $\{j\} \coleq J_k$
            \State Apply $M_k$ to $q_{0, j}$
            \State $i \coleq 1$
          \Else
            \State Apply $M_k$ to $q_{i, 0}, q_{i + 1, 0}$
            \State $i \coleq i + 2$
          \EndIf
        \EndFor
        \State Perform the reverse teleportations to move the qubits back to their original positions
      \EndFunction
    \end{algorithmic}
    \caption{The algorithm for performing arbitrary interactions in \kD{2} CCNTC}
    \label{alg:aiwcc}
  \end{algorithm}
}

\notintqc{\aiwcc}

The following theorem is a direct consequence of \algref{aiwcc}.

\simccntc*

Recalling the discussion following \thmref{inject}, we see that each of the $O(d)$ timesteps requires an $O(n)$ time classical computation if the classical controller is a sequential device or a $O(\log n)$ time computation if it is parallel or includes a simple classical circuit.  The time required to perform a single quantum operation is currently much longer than the time required to execute an instruction on a classical processor so this overhead is likely to be negligible.

The rest of our results for \kD{k} CCNTC follow from \thmref{sim-ccntc}.  Let $\cD_n$ denote the set of all $n \times n$ density matrices.  A general quantum operation is represented as a completely positive trace preserving (CPTP) map $\cE : \cD_n \ra \cD_n$.  Obviously, any circuit in the \kD{2} CCNTC model can also be applied when arbitrary interactions are allowed.  The following corollary is immediate.

\begin{corollary}[continues=cor:ntc-depth]
  Let $\cE : \cD_n \ra \cD_n$ be a CPTP map and let $\eps \geq 0$.  Let $d_1$ and $d_2$ be the minimum depths required to implement $\cE$ with error at most $\eps$ in the CCAC and \kD{k} CCNTC models respectively where $k \geq 2$.  Then $d_1 = \Theta(d_2)$.
\end{corollary}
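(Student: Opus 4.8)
The plan is to prove the two directions of $d_1 = \Theta(d_2)$ separately, each as a short reduction. For the lower bound $d_1 = O(d_2)$: any \kD{k} CCNTC circuit is, by inspection of Definitions~\ref{defn:ccntc} and \ref{defn:ccac}, also a legal CCAC circuit — the CCAC model simply drops the geometric locality constraint on two-qubit gates, so every sequence of operations the controlling machine $M$ emits in the CCNTC model is already a valid CCAC schedule with the same number of physical timesteps. Hence if $\cE$ is implemented with error at most $\eps$ in depth $d_2$ in \kD{k} CCNTC, the same circuit witnesses $d_1 \leq d_2$; in particular $d_1 = O(d_2)$. (One should remark that this uses $k \geq 2$ only trivially here; the content is all in the other direction.)

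For the upper bound $d_1 = \Omega(d_2)$, i.e. $d_2 = O(d_1)$: start from an optimal-depth CCAC circuit $C$ for $\cE$ with depth $d_1$, size $\poly(n)$ and width $\poly(n)$ (such a circuit exists by the footnoted assumption that a finite minimum depth is attained under the polynomial size/width restriction). Apply \thmref{sim-ccntc} with $k = 2$: this produces a \kD{2} CCNTC circuit computing the same CPTP map $\cE$ — with the same error $\eps$, since the teleportation-chain simulation is exact up to the (deterministically tracked) Pauli corrections — in depth $O(d_1)$, size $O(s n) = \poly(n)$ and width $n^2 = \poly(n)$. For general $k \geq 2$, embed the $n \times n$ grid used by \thmref{sim-ccntc} into the first two coordinate axes of $\bbZ^k$ (freezing the remaining $k-2$ coordinates), which is a legal \kD{k} CCNTC circuit of the same depth, size and polynomial width. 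Therefore $d_2 = O(d_1)$, and combining the two bounds gives $d_1 = \Theta(d_2)$.

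The only genuine subtlety — and the step I would be most careful about — is checking that the simulation of \thmref{sim-ccntc} preserves the error parameter $\eps$ exactly rather than merely approximately: one must confirm that teleporting qubits along chains and deferring the Pauli corrections (as described in \secref{teleport}) implements each CCAC physical timestep as the identical quantum channel, so that the composed channel is exactly $\cE$ and the error relative to the target is unchanged. Given that, the corollary is a direct composition of the two reductions, and no further estimates are needed. I would close by noting explicitly that the hypotheses $\eps \geq 0$ and $k \geq 2$ are exactly what the two invoked results require, so no extra conditions are hidden in the argument.
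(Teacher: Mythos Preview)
Your proposal is correct and follows the same approach as the paper: the paper simply remarks that ``Obviously, any circuit in the \kD{2} CCNTC model can also be applied when arbitrary interactions are allowed'' and declares the corollary ``immediate'' from \thmref{sim-ccntc}. Your write-up is more explicit than the paper's---in particular you spell out the embedding of the \kD{2} grid into $\bbZ^k$ for $k \geq 2$, the preservation of the error parameter under the exact teleportation simulation, and the role of the $\poly(n)$ size/width hypothesis---but these are precisely the details the paper leaves to the reader, not a different argument.
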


\intqc{\injectalg}

It is known that Shor's algorithm can be implemented in constant depth, polynomial size and polynomial width in CCAC~\cite{browne2010a} from which we obtain another corollary.

\shorconst*

Because controlled-$U$ operations and fanouts with unbounded numbers of control qubits or targets can be performed in constant depth, polynomial size and polynomial width in CCAC~\cite{hoyer2005a,browne2010a,takahashi2011a}, we have the following result.

\conUfanout*

\intqc{\aiwcc}

\section{Controlled operations in \kD{k} NANTC}
\label{sec:control}
In this section, we show how to control a single-qubit $U$ operation by $n$ controls using $O(\sqrt[k]{n})$ operations in \kD{k} NANTC.  We start with an $m \times m$ grid; for reasons that will become clear later, we require that $m$ is odd.  The control qubits are placed such that they are not at adjacent grid points; the central $3 \times 3$ square has no controls except when $m = 3$.  This is illustrated in Figures~\ref{fig:concc-3x3-0}\intqcornot{ and}{,} \ref{fig:concc-5x5-0}\notintqc{, \ref{fig:concc-7x7-0} and \ref{fig:concc-9x9-0}} for the cases where $m = 3$\intqcornot{ and}{,} $m = 5$\notintqc{, $m = 7$ and $m = 9$}.  Let $c$ be the center of the grid which corresponds to the target qubit.  The circuit works by considering each square ring in the grid with center $c$ (i.e., a set of points in the grid that all have the same distance to the center under the $\ell_\infty$ norm).  We start with the outermost such ring and propagate its control values into the next ring.  At each such step, some of the control values are combined so that all the values can fit into the smaller ring.  This continues until we reach a $3 \times 3$ ring at which point we apply a special sequence of operations to finish applying the controlled operation to the central qubit.  We will show that each stage can be implemented in constant depth so the overall depth is $O(\sqrt{n})$.

\subsection{The base case: the $3 \times 3$ grid}
We now describe how this circuit works in greater detail.  First, consider the case where $m = 3$.  The grid starts as shown in \figref{concc-3x3-0}; note that we do not force the central $3 \times 3$ square to be devoid of controls in this case since this is the entire grid.  All ancilla qubits start in the state $\ket{0}$.  We start by setting the lower left and upper right corner ancilla qubits to the ANDs of their neighboring controls as shown in \figref{concc-3x3-1}.  Both of these operations are disjoint, so this can be done in one logical timestep.  The next step is to swap these two corner qubits with the vertical middle qubits so they can interact with the central target qubit; this is done in \figref{concc-3x3-2}.  Finally, we apply a $U$ operation to the target qubit and control by the two middle qubits in \figref{concc-3x3-3}.

At this point, the target qubit has the desired value; however, there are two other ancilla qubits in \figref{concc-3x3-3} that must have their values uncomputed.  This is done by applying the operations of Figures~\ref{fig:concc-3x3}\subref*{fig:concc-3x3-1}--\subref*{fig:concc-3x3-2} in reverse order.

\newcommand{\conccthree}{
  \begin{figure}[H]
    \centering
    \subfloat[][]{
      \label{fig:concc-3x3-0}
      \includegraphics{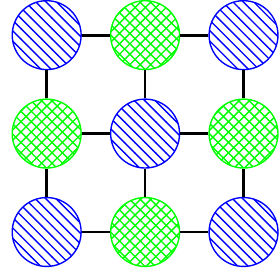}}
    \subfloat[][]{
      \label{fig:concc-3x3-1}
      \includegraphics{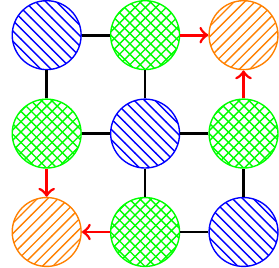}}
    \subfloat[][]{
      \label{fig:concc-3x3-2}
      \includegraphics{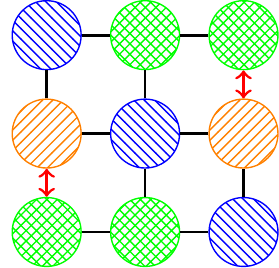}}
    \subfloat[][]{
      \label{fig:concc-3x3-3}
      \includegraphics{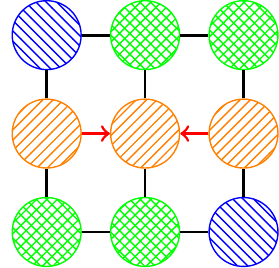}}
    \caption{A controlled operation on a $3 \times 3$ grid.  The qubits crosshatched green are the data qubits, the qubits shaded with diagonal upward orange lines are ancilla qubits which store intermediate data and the qubits shaded with diagonal downward blue lines are ancilla qubits which are currently unused.}
    \label{fig:concc-3x3}
  \end{figure}
  \resetsubfig
}

\notintqc{\conccthree}

\subsection{An example of the general case: the $5 \times 5$ grid}
We now consider an example of the general case where $m = 5$ as shown in \figref{concc-5x5-0}.  The first step is to propagate the values of the outer ring inwards; since the inner ring is $3 \times 3$, there are no controls in the inner ring so this can be done as shown in \figref{concc-5x5-1}.  We then rotate the inner ring as in \figref{concc-5x5-2}.  At this point, the remaining operations to perform are the same as in the $3 \times 3$ case and are shown in Figures~\ref{fig:concc-5x5}\subref*{fig:concc-5x5-3}--\subref*{fig:concc-5x5-5}.  At this point the target qubit has the desired value so we uncompute the intermediate ancillas by applying the operations of Figures~\ref{fig:concc-5x5}\subref*{fig:concc-5x5-1}--\subref*{fig:concc-5x5-4} in reverse order.

\intqc{\conccthree}

\newcommand{\conccfive}{
  \begin{figure}[H]
    \centering
    \subfloat[][]{
      \label{fig:concc-5x5-0}
      \includegraphics{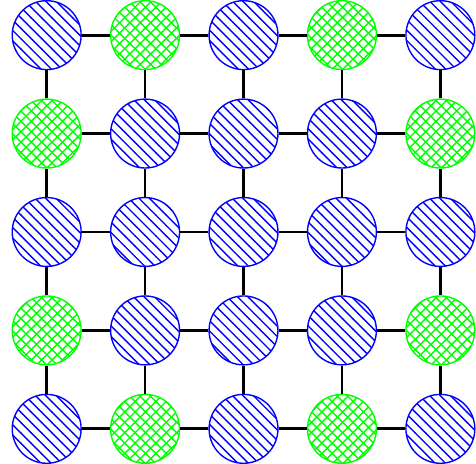}}
    \subfloat[][]{
      \label{fig:concc-5x5-1}
      \includegraphics{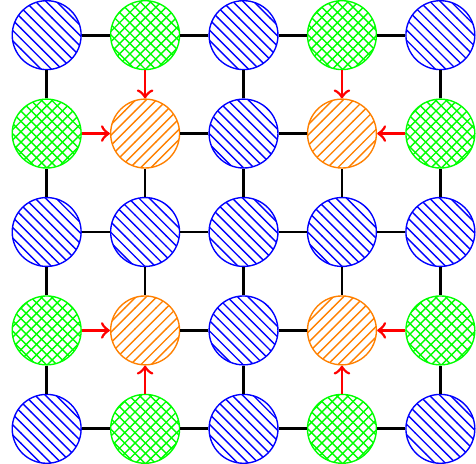}} \\
    \subfloat[][]{
      \label{fig:concc-5x5-2}
      \includegraphics{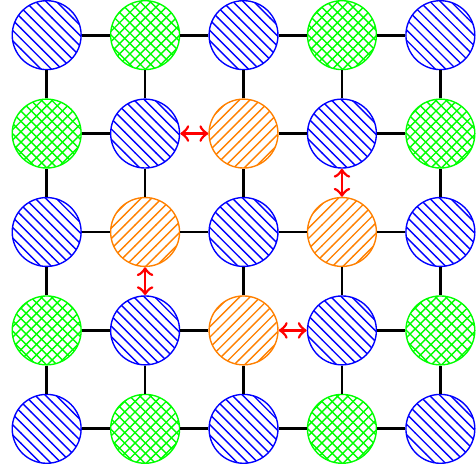}}
    \subfloat[][]{
      \label{fig:concc-5x5-3}
      \includegraphics{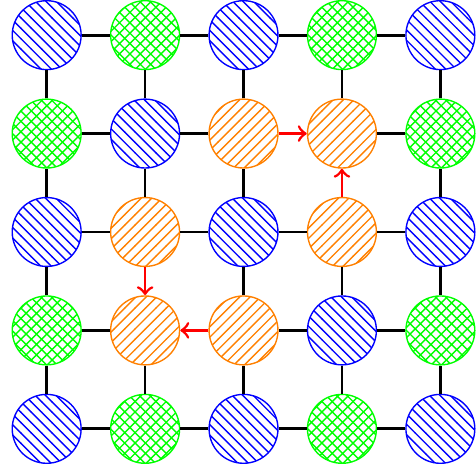}} \\
    \subfloat[][]{
      \label{fig:concc-5x5-4}
      \includegraphics{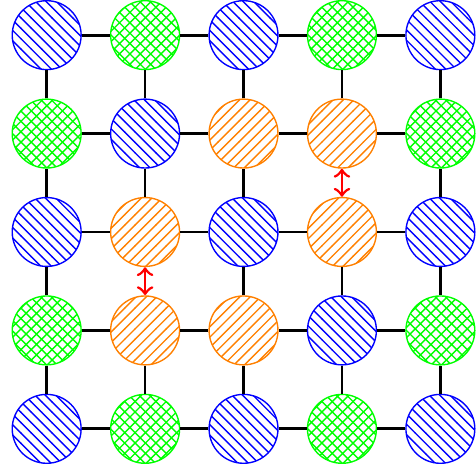}}
    \subfloat[][]{
      \label{fig:concc-5x5-5}
      \includegraphics{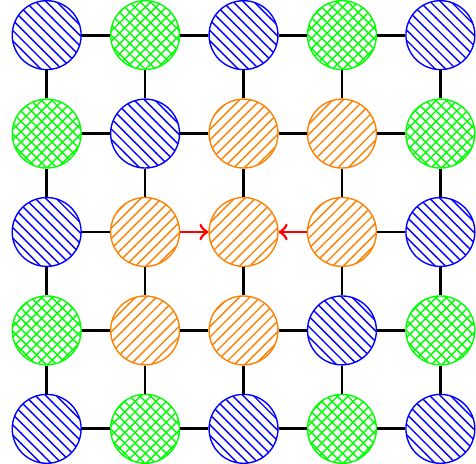}}
    \caption{A controlled operation on a $5 \times 5$ grid.  See \figref{concc-3x3} for the meaning of the colors and shading used.}
    \label{fig:concc-5x5}
  \end{figure}
  \resetsubfig
}

\newcommand{\controltwoDalg}{
  \begin{algorithm}[H]
    \centering
    \begin{algorithmic}[1]
      \Require{$m$ is odd}
      \Ensure{A controlled-$U$ operation is applied to the target}
      \Function{\control}{$m$}
        \State $k \coleq 0$
        \While{$m - 2 k \geq 3$}
          \State $\cstage(k)$
          \State $k \coleq k + 1$
        \EndWhile
        \State Uncompute the intermediate ancillas by repeating all operations except for the final $\CU{U}$ operation in reverse order
      \EndFunction
      \Function{\cstage}{$k$} \Comment{$k$ is the depth of the recursive call}
        \If{$k > 0$}
          \State $\conclock(k)$
          \State $\crotate(k)$
        \EndIf
        \If{$m - 2 k = 3$} \Comment{In this case, we have a $3 \times 3$ grid}
          \State $q_{k,k} \leftarrow q_{k,k} \oplus q_{k,k+1} \wedge q_{k+1,k}$ \label{line:3x3-start}
          \State $q_{k+2,k+2} \leftarrow q_{k+2,k+2} \oplus q_{k+1,k+2} \wedge q_{k+2,k+1}$
          \State $\swap(q_{k,k}, q_{k,k+1})$
          \State $\swap(q_{k+2,k+1}, q_{k+2,k+2})$ \label{line:3x3-controls-done}
          \State $\CU{U}(q_{k+1,k+1}, q_{k,k+1}, q_{k+2,k+1})$ \label{line:3x3-U}
        \EndIf
      \EndFunction
    \end{algorithmic}
    \caption{The algorithm for implementing a controlled-$U$ operation on an $m \times m$ grid}
    \label{alg:control-2D}
  \end{algorithm}
}
 
The same idea applies to an $m \times m$ grid except that when the inner rings have controls (i.e. for $m \geq 7$), the controls from the outer ring must be combined with those in the inner ring at the same time they are propagated inwards.  See \inshortorlong{the full version of our paper~\cite{rosenbaum2012c}}{\appref{examples}} for examples of the $7 \times 7$ and $9 \times 9$ cases.

\intqc{\conccfive}

\subsection{An algorithm for controlled-$U$ operations in $O(\sqrt{n})$ depth in \kD{2} NANTC}
We now present the algorithm used in Figures~\ref{fig:concc-3x3} -- \ref{fig:concc-\intqcornot{5x5}{9x9}} for the general $m \times m$ grid.  Consider an odd $m > 3$.  We denote the coordinates of the qubits on this grid by $(x, y)$ where $0 \leq x, y < m$.  Let $G$ be the set $\{0, \ldots, m - 1\}^2$ of all points on the grid and let $c = ((m - 1) / 2, (m - 1) / 2)$ be the central point.  As discussed previously, the geometry induced by the $\ell_\infty$ norm is useful for reasoning about this grid.  From now on, all distances in this subsection are understood to be with respect to the $\ell_\infty$ norm.

\intqc{\controltwoDalg}

We will say that the \emph{\nth{k} ring} is the set of points that have distance $(m - 1) / 2 - k$ to $c$ so the zeroth ring is outermost; we denote by $R_k = (r_0^k, \ldots, r_{\ell_k}^k)$ the points of the \nth{k} ring where $r_0^k$ is the bottom left corner and the rest of the points are in clockwise order.

The ring $R_k$ contains $4 \left(\frac{m - 1}{2} - k\right)$ controls so the entire grid has $n = 4 \sum_{3 < m - 2 k \leq m} \left(\frac{m - 1}{2} - k\right) = (1 / 2) (m^2 - 9 / 2)$ controls for $m > 3$.  In the case where $m = 3$, there are $4$ controls.  Thus, it is indeed the case that the depth is $O(\sqrt{n})$.

We denote by $q_{i,j}$ the value stored at the point $(i, j)$ and assume the operation to apply to the target is $U$.  The notation $\CU{U}(y, x_1, \ldots, x_\ell)$ denotes applying a controlled-$U$ operation to qubit $y$ conditional on $x_1, \ldots, x_\ell$.  To apply a swap operation to qubits $x$ and $y$, we write $\swap(x, y)$.  The pseudocode for the main algorithm is shown in \algref{control-2D}; the auxiliary functions are shown in \algref{control-aux}.

\notintqc{\conccfive}
\notintqc{\controltwoDalg}

\newcommand{\controlauxalg}{
  \begin{algorithm}[H]
    \centering
    \begin{algorithmic}[1]
      \Function{\conclock}{$k$}
        \State $C = ((k, k), (k, m - k - 1), (m - k - 1, m - k - 1), (m - k - 1, k))$ \Comment{The corners of $R_k$}
        \State $D = ((0, 1), (1, 0), (0, -1), (-1, 0))$ \Comment{The directions to follow between the corners of $R_k$}
        \For{$i \coleq 0, \ldots, 3$}
          \State $i_- \coleq i - 1 \mod 4$
          \State $i_+ \coleq i + 1 \mod 4$
          \State $q_{C_i} \leftarrow q_{C_i} \oplus q_{C_i - D_i} \wedge q_{C_i + D_{i_-}}$ \Comment{Compute the corner ancilla}
          \State Let $s_0, \ldots, s_{\ell_k / 4}$ be the points in $R_k$ from $C_i$ to $C_{i_+}$ excluding $C_{i_+}$
          \State $j \coleq 2$
          \While{$j < \ell_k / 4 - 1$} \Comment{Store the AND of two values in each ancilla in $L$ except for the last}
            \State $q_{L_j} \leftarrow q_{L_j} \oplus q_{L_j - D_i} \wedge q_{L_j + D_{i_-}}$
            \State $j \coleq j + 2$
          \EndWhile
          \State $p \coleq L_{\ell_k / 4 - 1}$
          \If{$m - 2 k > 3$} \Comment{For the last ancilla, use three controls unless we have a $5 \times 5$ grid}
            \State $q_p \leftarrow q_p \wedge q_{p - D_i} \wedge q_{p + D_{i_-}} \wedge q_{p + D_i}$
          \Else
            \State $q_p \leftarrow q_p \wedge q_{p - D_i} \wedge q_{p + D_{i_-}}$
          \EndIf
        \EndFor
      \EndFunction
      \Function{\crotate}{$k$}
        \State $i \coleq 1$
        \While{$i \leq \ell_k$}
          \State $i_+ \coleq i + 1 \mod \ell_k$
          \State $\swap(q_{r_i^k}, q_{r_{i_+}^k})$
          \State $i \coleq i + 2$
        \EndWhile
      \EndFunction
    \end{algorithmic}
    \caption{The ROTATION and CONTROL-CLOCKWISE operations}
    \label{alg:control-aux}
  \end{algorithm}
}


The following theorem is an immediate consequence of \algref{control-2D}.

\begin{theorem}
  \label{thm:2D-control}
  Controlled-$U$ operations with $n$ controls have depth $O(\sqrt{n})$, size $O(n)$ and width $O(n)$ in \kD{2} NANTC.
\end{theorem}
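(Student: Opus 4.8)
The plan is to read the theorem off the behavior of \algref{control-2D} together with its subroutines in \algref{control-aux}, in three steps: reduce an arbitrary $n$-control instance to a clean square grid, bound the three resources, and then verify correctness of the ring-by-ring combination.

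First I would reduce and count. Given $n$, pick the smallest odd $m$ (odd, so the rings shrink to a single central target) with $\tfrac12(m^2-9/2)\ge n$, taking $m=3$ when $n\le 4$; since the $k$-th ring holds $4(\tfrac{m-1}{2}-k)$ controls this gives $m=\Theta(\sqrt n)$. Place the $n$ control qubits in the non-adjacent control slots of the $m\times m$ grid and fill any unused slots with $\ket{1}$, which act as transparent controls. The grid has $m^2=\Theta(n)$ qubits, so the width is $O(n)$. The main \textbf{while} loop runs $O(\sqrt n)$ times and the final uncomputation only doubles the work, so once we know each $\cstage(k)$ (hence each $\conclock(k)$, $\crotate(k)$, and the $3\times3$ block) costs $O(1)$ logical timesteps, the depth is $O(\sqrt n)$. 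For size, $\conclock(k)$ and $\crotate(k)$ each emit $O(m-2k)$ operations, one per point of ring $R_k$, and the $3\times3$ block is $O(1)$, so the total is $\sum_{k}O(m-2k)=O(m^2)=O(n)$, again only doubled by uncomputation.

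The constant-depth claim for a stage is a statement about disjointness and locality. In $\conclock(k)$ the four sides of $R_k$ are processed together by the $i=0,\dots,3$ loop: the four corner updates touch the four distinct corners and their distinct ring-$(k-1)$ neighbors, and along each side the AND results are written into every other (ancilla) point, so all of these multiply-controlled-NOTs act on pairwise disjoint qubit sets and fit in $O(1)$ logical timesteps. Each such gate, like the Toffolis of the $3\times3$ block, is a NOT on a fresh $\ket{0}$ ancilla controlled by two or three of its grid neighbors, hence acts on $O(1)$ qubits lying in an $O(1)$-size sub-block of the grid; by the paper's convention such a logical operation decomposes into $O(1)$ physical timesteps respecting the NTC adjacency constraint. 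Similarly $\crotate(k)$ is a single layer of disjoint nearest-neighbor swaps, and the $3\times3$ block is a fixed $O(1)$-depth gadget. Finally every primitive used is self-inverse and the computation is a properly nested compute/uncompute, so rerunning all operations except the final $\CU{U}$ in reverse order restores every ancilla to $\ket{0}$.

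For correctness I would induct on $k$, maintaining the invariant that after $\cstage(k)$ the data held on ring $R_k$ is a fixed family of ancillas whose bitwise AND equals the conjunction of all control values in rings $R_0,\dots,R_k$, sitting in the canonical positions the next stage reads, while every other touched ancilla holds an intermediate AND slated for uncomputation. The base case $k=0$ is immediate: for $m>3$, $\cstage(0)$ is a no-op and the controls of $R_0$ are exactly the data in the positions $\conclock(1)$ reads. For the step, $\conclock(k)$ reads each data ancilla of $R_{k-1}$ through the outward-pointing neighbor terms and folds it together with the one or two fresh $R_k$ controls next to it into an ancilla of $R_k$, with the corner updates absorbing the corner regions and the three-control case for the last ancilla on each side (when $m-2k>3$) ensuring no $R_{k-1}$ datum is stranded; then $\crotate(k)$ shifts $R_k$ by one position so the new data ancillas land where $\cstage(k+1)$ expects them. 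The bookkeeping — each ring carries four fewer controls than the previous one, and exactly one rotation per stage compensates — is what makes this consistent, and checking it against \figref{concc-3x3}, \figref{concc-5x5}, and the $7\times7$ and $9\times9$ examples is the step I expect to be the main obstacle. When $m-2k=3$ the invariant places the remaining data on the $3\times3$ ring, and the base case of $\cstage$ ANDs it pairwise into two corner ancillas, swaps those to the vertical middle qubits, and applies $\CU{U}$ to the center, so the target is multiplied by $U^{\,c_1\wedge\cdots\wedge c_n}$; rerunning every operation except this $\CU{U}$ in reverse then restores all ancillas, completing the proof.
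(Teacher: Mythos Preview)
Your proposal is correct and follows the paper's own approach: the paper states the theorem as ``an immediate consequence of \algref{control-2D}'' after recording the control count $n=\tfrac12(m^2-9/2)$, and your argument is exactly the detailed unpacking of that claim --- choosing $m=\Theta(\sqrt{n})$, observing that each of the $O(m)$ stages is constant-depth because the operations within $\conclock$, $\crotate$, and the $3\times3$ gadget are disjoint and local, summing sizes over rings, and checking correctness by the ring-by-ring AND invariant with a final uncompute. Your write-up is substantially more explicit than the paper's one-line justification, but the route is the same.
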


\subsection{Generalization to \kD{k} NANTC}
In this section, we discuss how the circuit can be generalized to $k$ dimensions.  The algorithm works in the same way except the ring $R_k$ is replaced by the grid points on the surface of the hypercube formed by the points at $\ell_\infty$ distance $(m - 1) / 2 - k$ from the center $c$ of the grid.  We proceed as before and propagate the controls on $R_k$ into $R_{k+1}$ until we obtain a grid of width $3$.  Since the number of controls on a \kD{k} grid of length $m$ is $O(m^k)$, we obtain a circuit of depth $O(\sqrt[k]{n})$ for implementing a controlled-$U$ operation with $n$ controls.  The constant depends on $k$, but we assumed that $k$ is constant in \secref{def}.  From this, we obtain the following result.

\begin{theorem}
  \label{thm:kD-control}
  Controlled-$U$ operations with $n$ controls have depth $O(\sqrt[k]{n})$, size $O(n)$ and width $O(n)$ in \kD{k} NANTC.
\end{theorem}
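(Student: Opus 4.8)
The plan is to reduce the $k$-dimensional case to the structure already established for the $2$-dimensional algorithm (\algref{control-2D} and \algref{control-aux}), keeping the same ``peel one ring at a time toward the center'' strategy but replacing each square ring $R_k$ by the set of grid points lying on the surface of the $\ell_\infty$-sphere of radius $(m-1)/2 - k$ about the center $c$ of an $m\times\cdots\times m$ hypercubic grid with $m$ odd. First I would set up the notation: place the $n$ controls on the grid so that no two are $\ell_\infty$-adjacent and the central $3\times\cdots\times 3$ subcube is control-free (except in the base case $m=3$), exactly as in the $2$D construction. A hypercube surface of ``radius'' $r$ in $k$ dimensions is a union of $2k$ facets (each a $(k-1)$-dimensional patch of side roughly $2r+1$) glued along lower-dimensional faces; each facet can itself be viewed as a $(k-1)$-dimensional grid, so the propagation of control values from one shell inward will be carried out facet-by-facet, recursively invoking the lower-dimensional ring-processing primitives.

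The key steps, in order, are: (1) Verify the counting claim — a shell of radius $r$ in $\bbZ^k$ contains $\Theta(r^{k-1})$ grid points, hence summing over $r = 1, \ldots, (m-1)/2$ gives $n = \Theta(m^k)$ controls, so $m = \Theta(\sqrt[k]{n})$ and a constant number of logical timesteps per shell yields total depth $O(\sqrt[k]{n})$. (2) Describe the per-shell operation: as in \conclock, compute pairwise ANDs of neighboring control values along each facet so that the $\Theta(r^{k-1})$ values of shell $r$ are compressed to fit into the $\Theta((r-1)^{k-1})$ slots of shell $r-1$; since $(r-1)^{k-1}/r^{k-1} \to 1$ is not small, one must be slightly careful that a single round of pairwise combination suffices, but because the ratio of consecutive shell sizes is $1 - \Theta(1/r) \geq 1/2$ for all $r \geq 2$ (and the boundary faces where facets meet contribute only lower-order $O(r^{k-2})$ points), a constant number of AND-combining rounds per shell is enough — this is the direct analogue of the ``use three controls for the last ancilla'' bookkeeping in \algref{control-aux}. (3) Describe the inward move: a $\crotate$-type step that swaps the (now combined) shell-$r$ values into the shell-$(r-1)$ positions in constant depth, which is possible because the swaps are along disjoint edges. (4) Handle the base case $m=3$ exactly as in \figref{concc-3x3}: combine corner controls into the face-centers adjacent to the target, then apply $\CU{U}$ to the central target. (5) Uncompute all intermediate ancillas by running steps (2)--(4) in reverse except the final $\CU{U}$, doubling depth and size but not width. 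Finally, size is $O(n)$ because each control is touched $O(1)$ times and there are $O(1)$ ancillas per data point, and width is $O(n)$ because the whole computation lives on the $m^k = O(n)$ grid points; the constants absorbed depend only on $k$, which is constant by the convention of \secref{def}.

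I expect the main obstacle to be step (2): making precise that a single shell's worth of control values can always be compressed into the next smaller shell in constant depth, uniformly in $r$ and $k$. In two dimensions each ring is a one-dimensional cycle and the compression is an easy ``AND adjacent pairs'' argument; in $k$ dimensions the shell is a $(k-1)$-dimensional surface whose facets overlap on $(k-2)$-dimensional seams, and one must give an explicit disjoint matching of shell-$r$ points to shell-$(r-1)$ points together with an assignment of which controls get ANDed together, while ensuring all the two-qubit operations in a given logical timestep act on disjoint qubit pairs and all are between grid-neighbors. I would handle this by processing the $2k$ facets in a fixed order (or in $O(k)=O(1)$ color classes so that simultaneously-processed facets are non-adjacent), reducing within each facet to the already-understood $(k-1)$-dimensional ring-compression, and treating the lower-dimensional seams separately with their $O(r^{k-2})$ leftover points folded into the next round; since this adds only a constant (depending on $k$) number of extra rounds, the depth bound $O(\sqrt[k]{n})$ is preserved. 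Everything else — the counting, the rotation step, the base case, and the uncomputation — is a routine lift of the two-dimensional argument and I would state it briefly, pointing to \algref{control-2D} and \algref{control-aux} for the template.
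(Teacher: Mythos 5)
Your proposal follows essentially the same route as the paper: the paper's own argument for the $k$-dimensional case is a short sketch that replaces each square ring by the set of grid points on the surface of the hypercube at fixed $\ell_\infty$ distance from the center, propagates controls inward shell by shell exactly as in the two-dimensional algorithm, and concludes from the count $n = \Theta(m^k)$ that the depth is $O(\sqrt[k]{n})$ with size and width $O(n)$. Your additional care in step (2) about compressing a shell into the next one (facet coloring, seam points, a constant number of combining rounds) is detail the paper glosses over, but it does not change the approach.
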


\notintqc{\controlauxalg}

\section{Fanout operations}
\label{sec:fanout}
In this section, we describe quantum circuits for fanout.  In this case, we have a single control qubit and our goal is to XOR it into each of the target qubits.  The construction of fanout circuits is adapted from \algref{control-2D}; the circuits are the same except that the qubit that was the target becomes the control qubit and qubits that were the controls become the targets.  Let $n$ be the number of targets.  In the case of the circuit of \secref{control}, we simply apply all operations in reverse order and replace each Toffoli gate $y \leftarrow y \oplus x_1 \wedge \ldots \wedge x_n$ with a fanout operation $x_j \leftarrow x_j \oplus y$ for all $1 \leq j \leq n$.  This yields a \kD{k} NANTC fanout circuit of depth $O(\sqrt[k]{n})$.  We have shown the following.

\begin{theorem}
  \label{thm:kD-fanout}
  fanouts to $n$ targets have depth $O(\sqrt[k]{n})$, size $O(n)$ and width $O(n)$ in \kD{k} NANTC.
\end{theorem}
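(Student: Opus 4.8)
The plan is to obtain the fanout circuit directly from the controlled-$U$ circuit of \thmref{kD-control} by a purely syntactic transformation, and then to check that (i) the transformed circuit is still a legal \kD{k} NANTC circuit with the same asymptotic resources and (ii) it actually realizes fanout. Recall that \algref{control-2D}--\algref{control-aux} (and its \kD{k} generalization) produce a circuit whose operation list has the palindromic form $F$, $\CU{U}$, $F^{-1}$, where $F$ is a sequence of nearest-neighbor $\swap$ gates and generalized Toffoli gates $y \leftarrow y \oplus \bigwedge_i x_i$ of arity at most three, and where — by inspection of the explicit coordinates — in every such Toffoli each control $x_i$ is a grid-neighbor of the target $y$, and likewise the two controls of the central $\CU{U}$ are neighbors of its target. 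Specializing to $U = X$ so that the central gate is an involution, the whole list becomes a palindrome of involutions, so ``reverse the list'' is a no-op and the real content of the transformation is the substitution: replace each generalized Toffoli $y \leftarrow y \oplus \bigwedge_i x_i$ by the bundle of CNOTs $\{\, x_i \leftarrow x_i \oplus y \,\}_i$ (the qubit that was the target becomes the source), and leave the $\swap$ gates alone.

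For the resource bounds, first note the transformed list is a legal \kD{k} NANTC circuit: each substituted bundle consists of at most three CNOTs, all between a common qubit $y$ and its grid-neighbors, hence executable in at most three consecutive physical timesteps, i.e.\ in $O(1)$ depth, and bundles coming from disjoint operations of a single (logical) timestep of the control circuit act on disjoint $O(1)$-neighborhoods and so still run in parallel; $\swap$ gates are already local. Since the substitution is gate-for-bundle with $O(1)$ blow-up in depth and no new qubits, the depth remains $O(\sqrt[k]{n})$, the size remains $O(n)$ (it is at most a constant times the number of generalized Toffolis in the control circuit, which is $O(n)$ by \thmref{kD-control}), and the width is the same $m^{\,k} = O(n)$-qubit grid. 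At most the two ``corner'' ancillas of the innermost $3\times 3$ block end up holding a copy of the control bit rather than $\ket{0}$; if a genuinely clean circuit is wanted, they are reset by $O(1)$ extra CNOTs over $O(1)$ distance, which changes no bound.

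The substance is correctness: one must show the transformed circuit sends $(c,\, t_1,\dots,t_n,\, \vec{0}) \mapsto (c,\, t_1\oplus c,\dots,t_n\oplus c,\, \approx\!\vec{0})$, where $c$ sits at the center and the $t_j$ occupy the former control positions. The clean way to argue this is to observe that, once the value $c$ of the center qubit is fixed, every operation in the transformed circuit is $\bbF_2$-affine, so the whole circuit computes an affine map over $\bbF_2$ that is the identity when $c = 0$; it therefore suffices to identify its linear part at $c = 1$ and check that it equals the fanout map. This is done by an induction on the ring index that runs the correctness argument of \thmref{kD-control} ``in reverse'': there, stage $k$ combines the control values on ring $k$ by ANDs and pushes the result inward, so that $\bigwedge_i c_i$ is delivered to the center; dually, in the transformed circuit the center value is copied outward by XORs along exactly the reversed schedule, and the induction invariant is that after the (reversed) stages for rings $0,\dots,k$, every qubit on rings $0,\dots,k-1$ that ``sources'' a value in the control circuit has had $c$ XORed into it exactly once. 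The base of the induction — and the single place a real computation is required — is the innermost $3\times 3$ block, where one traces the eight-or-so dualized operations of \algref{control-2D} by hand; the inductive step reduces to the local identity that a generalized Toffoli $y\leftarrow y\oplus\bigwedge_i x_i$ together with its uncomputation partner, after dualization and reversal, fans $y$ out to the $x_i$. I expect this bookkeeping — pairing each Toffoli with its uncomputation counterpart across the two halves of the palindrome and verifying the XOR count is exactly one at each target (and exactly zero, or $O(1)$ correctable, at each ancilla) — to be the only nontrivial step; the geometric-locality checks are routine given the explicit coordinates in \algref{control-2D} and \algref{control-aux}, and the \kD{k} case differs from $k=2$ only in replacing square rings by hypercube shells, exactly as in \thmref{kD-control}.
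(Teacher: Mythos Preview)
Your proposal is correct and follows essentially the same approach as the paper: take the controlled-$U$ circuit of \thmref{kD-control}, reverse the operation list, and replace each generalized Toffoli $y \leftarrow y \oplus \bigwedge_i x_i$ by the bundle of CNOTs $x_i \leftarrow x_i \oplus y$. The paper states this transformation in one sentence and asserts the resource bounds; you add the palindrome observation (so reversal is redundant once $U=X$), an explicit $\bbF_2$-affine correctness argument, and the remark about the two dirty corner ancillas---all of which are welcome elaborations but not a different route.
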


\section{Optimality}
\label{sec:optimal}
In this section, we prove that the depth, size and width of the circuits generated by \algref{control-2D} (and its \kD{k} generalization) are optimal for NANTC.  A similar lower bound for addition is discussed in~\cite{choi2011a}.  These lower bounds hold regardless of where the controls and target qubits are located on the \kD{k} grid.  They also hold for a more general class of operations that contains the controlled-$U$ operations and fanouts.

\intqc{\controlauxalg}

Since each qubit is acted on by a constant number of operations in \algref{control-2D}, the size of the circuit is $O(n)$.  This is clearly optimal since any circuit that implements a controlled operation must act on each of the controls.

\begin{theorem}
  \label{thm:CU-lb}
  Any NANTC quantum circuit that implements a non-trivial controlled-$U$ operation with $n$ controls has size $\Omega(n)$.
\end{theorem}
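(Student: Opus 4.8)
I will argue by causal influence: in a non-adaptive circuit the realized CPTP map cannot ``use'' a control qubit that no gate touches, whereas a non-trivial controlled-$U$ depends on each control; since each gate touches at most two qubits, covering all $n$ controls forces $\Omega(n)$ gates.

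First I fix the data register $D = \{q_1,\dots,q_n,t\}$ (with $q_i$ the controls and $t$ the target), take all ancillas initialized to $\ket{0}$ and discarded at the end, and let $\cE$ denote the CPTP map on $D$ realized by the circuit; by hypothesis $\cE$ is the $n$-controlled-$U$ channel $\Lambda$ with $U \neq I$. The key step is the claim that \emph{every $q_i$ is acted on by at least one operation of the circuit}. For the contrapositive, suppose no operation touches $q_i$. Since the circuit is non-adaptive, I may delete that wire from every gate description; the qubit $q_i$ is then simply carried through untouched (measurements of other wires are irrelevant, acting on disjoint qubits), so the realized map factors as $\cE = \cE' \otimes \mathrm{id}_{q_i}$ for a channel $\cE'$ on $D \setminus \{q_i\}$. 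It therefore suffices to check that $\Lambda$ admits no such factorization, which I do by a short case split on $U$. If $U$ is not a scalar, pick $\ket{\psi}$ with $U\ket{\psi}$ not proportional to $\ket{\psi}$ and feed in $\ket{+}_{q_i}\otimes\ket{1}^{\otimes(n-1)}\otimes\ket{\psi}_t$; its image under $\Lambda$ is $\tfrac{1}{\sqrt2}\bigl(\ket{0}\ket{1}^{\otimes(n-1)}\ket{\psi}+\ket{1}\ket{1}^{\otimes(n-1)}U\ket{\psi}\bigr)$, which is entangled across the $q_i$ cut. If instead $U = e^{i\theta}I$ with $\theta\neq 0$, feed in $\ket{+}_{q_i}\otimes\ket{+}_{q_j}\otimes\ket{1}^{\otimes(n-2)}\otimes\ket{\psi}_t$ for another control $q_j$; the $e^{i\theta}$ phase lands on the all-ones control component and the reduced state on $q_iq_j$ becomes $\tfrac12\bigl(\ket{00}+\ket{01}+\ket{10}+e^{i\theta}\ket{11}\bigr)$, again entangled across $q_i$. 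Either way a product-across-$q_i$ input is mapped to an output entangled across $q_i$, impossible for $\cE'\otimes\mathrm{id}_{q_i}$; contradiction, so the claim holds.

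Given the claim the count is immediate: each basic operation of an NANTC circuit acts on at most two qubits, so if all $n$ controls must lie in the support of some operation then $\sum_i\abs{S_i}\ge n/2 = \Omega(n)$ (using logical timesteps instead only changes the constant, each acting on $O(k)=O(1)$ qubits). The only point needing care is the non-factorization of $\Lambda$ across a single control, which is why the scalar-$U$ subcase is isolated; everything else is bookkeeping, and since the matching $O(n)$ upper bound is already furnished by \algref{control-2D} and its \kD{k} generalization, the theorem records that the size is $\Theta(n)$.
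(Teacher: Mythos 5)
Your proof is correct, and it certifies the key fact by a different mechanism than the paper. The paper obtains \thmref{CU-lb} as a special case of the general sensitivity framework: \thmref{sens-size-lb} shows that for an $\eps$-sensitive family, a circuit of size $o(n)$ must miss some sensitive control $x$, and then a perturbation $\cF$ on $x$ changes the ideal reduced output on $y$ by $\geq \eps$ in trace norm while leaving the circuit's output untouched, a contradiction via the triangle inequality; combined with \lemref{CU-fanout-sens} this yields \corref{CU-fanout-size-lb}, which is robust (it rules out even approximations with trace-norm error below $1/2$) and covers fanout as well. You share the same counting skeleton (size $o(n)$ forces an untouched control, since each basic operation touches at most two qubits), but you replace the trace-norm perturbation step with an exact factorization argument: an untouched wire forces $\cE = \cE' \otimes \mathrm{id}_{q_i}$, which maps product-across-$q_i$ inputs to product outputs, whereas the ideal controlled-$U$ creates entanglement across that cut. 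This buys you an elementary, self-contained proof of the literal statement, and your case split (non-scalar $U$ versus $U = e^{i\theta}I$ handled with a second control) supplies exactly the verification that the paper compresses into the unproved \lemref{CU-fanout-sens}; indeed your analysis is more careful there, since for $U$ close to the identity the sensitivity constant degrades and the blanket claim of $1$-sensitivity is delicate, while your argument needs only $U \neq I$. What you give up is robustness: a channel that factors across $q_i$ can still be a nontrivially good approximation to the controlled-$U$, so your entanglement dichotomy does not by itself yield the approximate version (error $< \eps/2$) that the paper's quantitative perturbation argument provides; for the exact statement as written, however, your proof is complete.
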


The \emph{trace norm} of a density matrix $\rho$ (denoted $\trnorm{\rho}$) is equal to $(1 / 2) \tr \abs{\rho}$\notintqc{ (the $(1 / 2)$ factor ensures that $\norm{\rho - \sigma}_1$ is the probability of distinguishing $\rho$ and $\sigma$ with the best possible measurement)}.  Consider a general quantum operation $\cE : \cD_n \ra \cD_n$ represented as a CPTP map.  We will use an operator version of the trace norm defined by $\trnorm{\cE} = \sup_{\rho \in \cD} \norm{\cE(\rho)}_1$; if $\cE_1$ and $\cE_2$ are two CPTP maps then $\trnorm{\cE_1 - \cE_2}$ is the probability of distinguishing between them on the worst possible input.  Thus, it is a measure of how much these operations differ.  We will also make use of the partial trace.  If $x$ is a qubit, then we will denote the partial trace over all qubits except $x$ by $\tr_{\neg x} = \tr_{\bbZ^k \setminus \{x\}}$.

Controlled-$U$ operations are special case of a more general class of operations.

\begin{definition}
  \label{defn:input-sens}
  Let $\cE : \cD_n \ra \cD_n$ be a CPTP map.  We say that $\cE$ is $\eps$-input sensitive if there exists a qubit $y$ such that for $\Omega(n)$ qubits $x$, there exists a CPTP map $\cF : \cD_n \ra \cD_n$ acting only on $x$ such that $\trnorm{\tr_{\neg y} (\cE \cF - \cE)} \geq \eps$.
\end{definition}

Intuitively, an $\eps$-input sensitive operation is a generalization of a Toffoli gate where modifying some input qubit $x$ yields a different value on the output with probability $\eps$.  Similarly, we can define $\eps$-output sensitive operations which are generalizations of fanout.

\begin{definition}
  \label{defn:output-sens}
  Let $\cE : \cD_n \ra \cD_n$ be a CPTP map.  We say that $\cE$ is $\eps$-output sensitive if there exists a qubit $x$ such that for $\Omega(n)$ qubits $y$, there exists a CPTP map $\cF : \cD_n \ra \cD_n$ acting only on $x$ such that $\trnorm{\tr_{\neg y} (\cE \cF - \cE)} \geq \eps$.
\end{definition}

We say that $\cE$ is \emph{$\eps$-sensitive} if it is $\eps$-input or $\eps$-output sensitive.  A family $\{\cE : \cD_n \ra \cD_n\}$ of CPTP maps is $\eps$-sensitive if every $\cE_n$ is $\eps$-sensitive.  Our lower bounds will apply to all families of $\eps$-sensitive operations.  All proofs will be for the case of $\eps$-input sensitive operations but the argument of $\eps$-output sensitive operations is all but identical.

\begin{theorem}
  \label{thm:sens-size-lb}
  Let $\{\cE_n : \cD_n \ra \cD_n\}$ be a family of $\eps$-sensitive operations.  Then any family of \kD{k} NANTC circuits $\{C_n\}$ such that $\trnorm{\cE_n - C_n} < \eps / 2$ for all $n$ has size $\Omega(n)$.
\end{theorem}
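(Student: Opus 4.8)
The plan is to argue by contradiction: suppose $C_n$ has size $o(n)$; I will show that $C_n$ cannot $\eps/2$-approximate $\cE_n$, contradicting the hypothesis. The starting observation is that, by \defref{input-sens}, there is a distinguished qubit $y$ and a set $X$ of $\Omega(n)$ ``sensitive'' qubits $x$, each equipped with a local perturbation $\cF_x$ (acting only on $x$) for which $\trnorm{\tr_{\neg y}(\cE_n \cF_x - \cE_n)} \geq \eps$. If the circuit $C_n$ has size $s = o(n)$, then the operations in $C_n$ act on at most $2s = o(n)$ qubits in total (each basic operation touches at most two). Hence, since $\abs{X} = \Omega(n)$, there exists at least one sensitive qubit $x^\star \in X$ that is never touched by any operation of $C_n$.

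For that untouched qubit $x^\star$, the circuit $C_n$ commutes with $\cF_{x^\star}$: since no operation of $C_n$ acts on $x^\star$ and $\cF_{x^\star}$ acts only on $x^\star$, the two CPTP maps are applied to disjoint registers, so $C_n \cF_{x^\star} = \cF_{x^\star} C_n$. Consequently, $\tr_{\neg y}(C_n \cF_{x^\star} - C_n) = \tr_{\neg y}(\cF_{x^\star} C_n - C_n)$, and because $\cF_{x^\star}$ acts only on $x^\star$ while the partial trace $\tr_{\neg y}$ retains only qubit $y \neq x^\star$, this quantity is identically zero: $\tr_{\neg y}\big(\cF_{x^\star} C_n(\rho)\big) = \tr_{\neg y}\big(C_n(\rho)\big)$ for every $\rho$. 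So $\trnorm{\tr_{\neg y}(C_n \cF_{x^\star} - C_n)} = 0$.

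Now I combine the two facts with the triangle inequality for the operator trace norm, using that partial trace and composition with a fixed CPTP map are both contractions in this norm. We have
\begin{align*}
  \eps &\leq \trnorm{\tr_{\neg y}(\cE_n \cF_{x^\star} - \cE_n)} \\
       &\leq \trnorm{\tr_{\neg y}(\cE_n \cF_{x^\star} - C_n \cF_{x^\star})} + \trnorm{\tr_{\neg y}(C_n \cF_{x^\star} - C_n)} + \trnorm{\tr_{\neg y}(C_n - \cE_n)} \\
       &\leq \trnorm{\cE_n - C_n} + 0 + \trnorm{\cE_n - C_n} < \eps/2 + \eps/2 = \eps,
\end{align*}
a contradiction. (Here the first and third middle terms are bounded by $\trnorm{\cE_n - C_n}$ using that $\tr_{\neg y}$ is trace-norm contractive and that $\trnorm{\cD \cF - \cD' \cF} = \sup_\rho \norm{(\cD - \cD')(\cF(\rho))}_1 \leq \trnorm{\cD - \cD'}$.) Therefore the size of $C_n$ must be $\Omega(n)$. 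The $\eps$-output sensitive case is handled identically, swapping the roles of $x$ and $y$ in \defref{output-sens}.

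The only genuinely delicate point is the commutation/vanishing step: I must make sure that ``$\cF_{x^\star}$ acts only on $x^\star$'' combined with ``$C_n$ never touches $x^\star$'' truly forces $\tr_{\neg y}(\cF_{x^\star}C_n - C_n) = 0$. This is where it matters that $\cF_{x^\star}$ is supported on a single qubit $x^\star$ distinct from the observed qubit $y$, and that it is trace-preserving, so that tracing out everything except $y$ erases its effect entirely; writing $C_n(\rho) = \sigma$ and noting $\cF_{x^\star}\otimes \mathrm{id}$ changes only the $x^\star$ marginal, we get $\tr_{x^\star}\!\big((\cF_{x^\star}\otimes\mathrm{id})\sigma\big) = \tr_{x^\star}\sigma$, hence equal marginals on $y$. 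Everything else is the standard contractivity of the diamond/trace norm under composition and partial trace, plus the counting argument that $o(n)$ gates cannot cover $\Omega(n)$ qubits.
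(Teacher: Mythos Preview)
Your proposal is correct and follows essentially the same approach as the paper: both argue by contradiction, use the counting argument that an $o(n)$-size circuit must miss some sensitive qubit $x^\star$, observe that $\tr_{\neg y}(C_n\cF_{x^\star}) = \tr_{\neg y}(C_n)$ for such an untouched $x^\star$, and then derive a contradiction via the triangle inequality and contractivity. The only cosmetic differences are that you spell out the commutation/trace-preservation reasoning behind the vanishing step more explicitly, and you arrange the triangle inequality as a three-term upper bound on $\eps$ rather than the paper's reverse triangle inequality yielding $\trnorm{\tr_{\neg y}(C_n-\cE_n)}>\eps/2$; these are equivalent rearrangements of the same estimate.
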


\begin{proof}
  Suppose that $C_n$ has size $o(n)$.  Assume $\cE_n$ is $\eps$-input sensitive and choose a qubit $y$ as in definition \defref{input-sens} (the case where it is $\eps$-output sensitive is very similar).  There are $\Omega(n)$ qubits $x$ such that there exists a CPTP map $\cF : \cD_n \ra \cD_n$ acting only on $x$ such that $\trnorm{\tr_{\neg y} (\cE_n \cF - \cE_n)} \geq \eps$.  For large $n$, there is such an $x$ which is not acted on by $C_n$.  Then $\tr_{\neg y} C_n \cF = \tr_{\neg y} C_n$.  Now \intqc{$\trnorm{\tr_{\neg y} (C_n - \cE_n)} \geq \abs{\trnorm{\tr_{\neg y} (C_n \cF - \cE_n \cF)} - \trnorm{\tr_{\neg y} (\cE_n \cF - \cE_n)}} > \eps / 2$}
\notintqc{
  \begin{align}
    \trnorm{\tr_{\neg y} (C_n - \cE_n)} & = \trnorm{\tr_{\neg y} (C_n \cF - \cE_n)} \\
                                     {} & \geq \abs{\trnorm{\tr_{\neg y} (C_n \cF - \cE_n \cF)} - \trnorm{\tr_{\neg y} (\cE_n \cF - \cE_n)}} \\
                                     {} & > \eps / 2
  \end{align}

}
which is a contradiction.
\end{proof}

We call a controlled-$U$ operation \emph{non-trivial} if $U \not= I$.  It is easy to prove the following.

\begin{lemma}
  \label{lem:CU-fanout-sens}
  Non-trivial controlled-$U$ operations and fanouts are $1$-sensitive.
\end{lemma}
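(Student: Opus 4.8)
The plan is to verify both sensitivity properties directly from Definitions~\ref{defn:input-sens} and~\ref{defn:output-sens} by exhibiting, in each case, the required qubit and the family of perturbations $\cF$.

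\textbf{Controlled-$U$ is $1$-input sensitive.} Take $y$ to be the target qubit. For each of the $n$ control qubits $x$, let $\cF$ be the bit-flip $\sigma_1$ applied to $x$ (a unitary, hence CPTP, acting only on $x$). I would choose a computational-basis input that distinguishes the two circuits: set all controls other than $x$ to $\ket{1}$, set $x$ to $\ket{0}$, and put the target in a state $\ket{\psi}$ with $U\ket{\psi}\neq\ket{\psi}$ up to phase (such a $\ket{\psi}$ exists since $U\neq I$; if $U$ is diagonal in the computational basis take $\ket{\psi}$ a basis state on which it acts by a nontrivial phase only if that phase is observable after tracing — more robustly, pick $\ket{\psi}$ an eigenvector of $U$ is wrong since phase is lost, so instead pick $\ket{\psi}$ with $\trnorm{\tr_{\neg y}(U\ket\psi\bra\psi U^\dagger - \ket\psi\bra\psi)} = 1$, which is possible precisely when $U\neq I$ acts non-trivially as a channel on a single qubit). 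On this input, $\cE$ alone leaves the target untouched (since $x=\ket 0$, not all controls are set), whereas $\cE\cF$ first flips $x$ to $\ket 1$, so all controls are set and $U$ fires. Hence $\tr_{\neg y}(\cE\cF - \cE)$ applied to this input has trace norm $1$, giving $\trnorm{\tr_{\neg y}(\cE\cF-\cE)}\geq 1$. Since this works for every control qubit $x$ and there are $n=\Omega(n)$ of them, $\cE$ is $1$-input sensitive.

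\textbf{Fanout is $1$-output sensitive.} Take $x$ to be the control qubit; the qubits $y$ of Definition~\ref{defn:output-sens} are the $n$ targets. Let $\cF=\sigma_1$ on $x$, and use the input with $x=\ket 0$ and every target $=\ket 0$. Then $\cE$ leaves target $y$ in $\ket 0$, while $\cE\cF$ flips $x$ to $\ket 1$ and XORs it into $y$, leaving $y$ in $\ket 1$. So $\trnorm{\tr_{\neg y}(\cE\cF-\cE)}\geq 1$ for each of the $n$ targets, and fanout is $1$-output sensitive. In both cases $\cE$ is therefore $1$-sensitive.

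\textbf{Main obstacle.} The only subtlety is the choice of the target state $\ket\psi$ in the controlled-$U$ case: because we trace out everything but $y$, any perturbation of the target that differs only by a global phase is invisible, so one must argue that $U\neq I$ guarantees the existence of a single-qubit density matrix $\rho$ with $\trnorm{U\rho U^\dagger - \rho}=1$ (or at least bounded below by a positive constant; the lemma as stated claims the constant is $1$). This is where a short argument is needed: writing $U = e^{i\phi}(\cos\theta\, I + i\sin\theta\,(\hat n\cdot\vec\sigma))$ with $\theta\notin\pi\bbZ$, conjugation by $U$ rotates the Bloch sphere by angle $2\theta$ about $\hat n$; taking $\rho$ pure on the equator orthogonal to $\hat n$ and choosing $\theta$-dependent — actually for the worst-case $\rho$ over all states the maximal trace distance achieved by a nontrivial rotation is $1$ (attained by a pure state and its image when the rotation angle is $\pi$, and by a suitable pure state for any angle the distance is $\sin(\text{angle}/2)>0$, and one can instead feed a maximally mixed state on a two-qubit block to boost this to $1$ using entanglement with an ancilla). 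Since Definition~\ref{defn:input-sens} allows $\cF$ and the relevant comparison to be on $\cD_n$ with $n$ qubits, I would use an ancilla qubit maximally entangled with the target so that the channel-level trace norm is exactly $1$ for any $U\neq I$; this removes the phase ambiguity entirely and is the one place the proof needs care rather than bookkeeping.
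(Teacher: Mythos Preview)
The paper does not prove this lemma; it only says ``It is easy to prove the following.'' So there is no reference proof to compare against, and one must assess your argument on its own.

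Your fanout argument is correct and is the natural one. Your controlled-$U$ argument is also correct whenever $U$ admits a state $\ket{\psi}$ with $U\ket{\psi}\perp\ket{\psi}$ (equivalently $\tr U=0$), and you rightly flag that this fails in general: for $U=\diag(1,e^{i\theta})$ the best trace distance between $\ket{\psi}\bra{\psi}$ and $U\ket{\psi}\bra{\psi}U^\dagger$ is $|\sin(\theta/2)|<1$.

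The gap is that your proposed fix---entangling the target with an ancilla to promote the trace norm to a diamond-norm--type quantity---does not apply here. Definition~\ref{defn:input-sens} fixes $\cE$ on its given Hilbert space and then applies $\tr_{\neg y}$, which traces out \emph{every} qubit other than the single qubit $y$; there is no room for an untraced reference system. Concretely, since $\cF$ is trace-preserving on $x\neq y$ one has $\tr_{\neg y}\cF=\tr_{\neg y}$, and a short manipulation gives
\[
\tr_{\neg y}(\cE\cF-\cE)=\tr_{\neg y}\,(\cE-I)(\cF-I),
\]
so the quantity in question is controlled by how far $\cE$ is from the identity channel. For $U$ close to $I$ this is strictly less than $1$ regardless of the choice of $y$, $x$, $\cF$, or input state (your own $|\sin(\theta/2)|$ computation already exhibits this for the natural choices, and entangled inputs do no better).

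So the lemma as literally stated---sensitivity constant exactly $1$ for \emph{every} non-trivial $U$---appears to be a slight overstatement. What your product-state argument does establish, and what suffices for Corollaries~\ref{cor:CU-fanout-size-lb} and~\ref{cor:CU-fanout-depth-lb}, is that for each fixed non-trivial $U$ the family of controlled-$U$ operations is $\eps$-sensitive for some constant $\eps=\eps(U)>0$. If you want the constant $1$, either restrict to $U$ with $\tr U=0$ (which covers Toffoli, $U=\sigma_1$) or weaken the claim to $\eps(U)$-sensitivity.
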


We now obtain a corollary of \thmref{sens-size-lb} of which \thmref{CU-lb} is a special case.

\begin{corollary}
  \label{cor:CU-fanout-size-lb}
  Let $\{\cE_n : \cD_n \ra \cD_n\}$ denote a family of controlled-$U$ operations or fanouts.  Any family of \kD{k} NANTC circuits $\{C_n\}$ such that $\trnorm{C_n - \cE_n} < 1 / 2$ has size $\Omega(n)$.
\end{corollary}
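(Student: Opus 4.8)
The plan is to derive \corref{CU-fanout-size-lb} as an immediate consequence of the two preceding results: \thmref{sens-size-lb} (any \kD{k} NANTC circuit approximating an $\eps$-sensitive operation to within $\eps/2$ has size $\Omega(n)$) and \lemref{CU-fanout-sens} (controlled-$U$ operations and fanouts are $1$-sensitive). Concretely, let $\{\cE_n\}$ be a family of controlled-$U$ operations (with $U \neq I$) or fanouts, and let $\{C_n\}$ be a family of \kD{k} NANTC circuits with $\trnorm{C_n - \cE_n} < 1/2$. By \lemref{CU-fanout-sens}, each $\cE_n$ is $1$-sensitive, so the family $\{\cE_n\}$ is $\eps$-sensitive with $\eps = 1$. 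The hypothesis $\trnorm{C_n - \cE_n} < 1/2 = \eps/2$ is then exactly the hypothesis of \thmref{sens-size-lb}, which yields $\abs{C_n} = \Omega(n)$.

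The only thing to be careful about is the trivial-controlled-$U$ edge case in \thmref{CU-lb}: the corollary as stated covers \emph{all} families of controlled-$U$ operations, but \lemref{CU-fanout-sens} only asserts $1$-sensitivity for the non-trivial ones (and for fanouts). So I would note that if $U = I$ the statement of \thmref{CU-lb} ($\Omega(n)$ size for \emph{non-trivial} controlled-$U$) is vacuous, and the corollary's claim is only interesting — and only needs to be proved — in the non-trivial case, where \lemref{CU-fanout-sens} applies. After that observation, the argument is a one-line substitution into \thmref{sens-size-lb}.

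I expect no real obstacle here; the substantive content lives in \thmref{sens-size-lb} (whose proof is the short distinguishing-probability argument already given: if $\abs{C_n} = o(n)$, pick a sensitive qubit $x$ untouched by $C_n$, note $\tr_{\neg y} C_n \cF = \tr_{\neg y} C_n$, and apply the triangle inequality to get $\trnorm{\tr_{\neg y}(C_n - \cE_n)} > \eps/2$, contradicting $\trnorm{C_n - \cE_n} < \eps/2$ since the partial trace and composition with $\cF$ are trace-norm nonincreasing) and in \lemref{CU-fanout-sens} (for non-trivial controlled-$U$: take $y$ to be the target, $x$ ranging over the $n$ controls, and $\cF$ the bit-flip on $x$; flipping a control changes whether $U$ is applied to $y$, and since $U \neq I$ there is an input state on which this is detectable with probability $1$ after tracing out everything but $y$ — for fanout, symmetrically, $x$ is the control, $y$ ranges over targets, and flipping $x$ flips $y$). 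The corollary itself is then purely a matter of matching up the constant $\eps = 1$ and invoking the two prior results, so the write-up is essentially: "Immediate from \thmref{sens-size-lb} and \lemref{CU-fanout-sens}, noting that \thmref{CU-lb}'s claim is vacuous when $U = I$."
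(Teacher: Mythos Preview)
Your proposal is correct and matches the paper's approach exactly: the paper presents this corollary without proof, introducing it simply as ``a corollary of \thmref{sens-size-lb}'' immediately after stating \lemref{CU-fanout-sens}, so the intended argument is precisely the substitution $\eps = 1$ you describe. Your remark on the $U = I$ edge case is a valid caveat that the paper glosses over (the corollary as literally stated is false for $U = I$, since the identity circuit has size $0$); reading the corollary as implicitly restricted to non-trivial $U$, as the surrounding text suggests, resolves this.
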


This shows that \algref{control-2D} (and its \kD{k} generalization) have optimal size.  Next, we will show that $\eps$-sensitive \kD{k} NTC circuits have depth $\Omega(\sqrt[k]{n})$.  For this we require the following easy lemma.

\begin{lemma}
  \label{lem:packing}
  For any subset $S \subseteq \bbZ^k$ and any $x \in \bbZ^k$, there exists a subset $T \subseteq S$ of size $\Omega(\abs{S})$ such that for all $y \in T$, $\norm{x - y}_1 = \Omega(\sqrt[k]{\abs{S}})$.
\end{lemma}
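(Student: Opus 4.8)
\medskip

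The statement is a standard volume/packing argument, so the plan is to bound how many points of $S$ can be close to $x$ and discard them. First I would fix a threshold radius $\rho = c \sqrt[k]{\abs{S}}$ for a small constant $c > 0$ to be chosen at the end, and consider the ``bad'' set $B = \setb{y \in S}{\norm{x - y}_1 < \rho}$. The key observation is that $B$ is contained in the $\ell_1$-ball of radius $\rho$ centered at $x$, and an $\ell_1$-ball of radius $\rho$ in $\bbZ^k$ contains at most $f(k) \rho^k$ integer points for some constant $f(k)$ depending only on $k$ (for instance, it is contained in the $\ell_\infty$-cube of side $2\rho + 1$, which has at most $(2\rho + 1)^k$ integer points, and since $\rho = \Omega(1)$ when $\abs{S}$ is large this is $O(\rho^k)$). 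Substituting $\rho = c\sqrt[k]{\abs{S}}$ gives $\abs{B} \leq f(k) c^k \abs{S}$.

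\medskip

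Next I would choose $c$ small enough (depending only on $k$) that $f(k) c^k \leq 1/2$, so that $\abs{B} \leq \abs{S}/2$. Then setting $T = S \setminus B$, we have $\abs{T} \geq \abs{S}/2 = \Omega(\abs{S})$, and by construction every $y \in T$ satisfies $\norm{x - y}_1 \geq \rho = c\sqrt[k]{\abs{S}} = \Omega(\sqrt[k]{\abs{S}})$, which is exactly what is claimed. The only mild subtlety is handling small $\abs{S}$: if $\abs{S}$ is bounded by a constant, the claim holds trivially by taking $T$ to be a single point (the bound $\Omega(\sqrt[k]{\abs{S}})$ is then $\Omega(1)$, which any nonzero distance — or the degenerate case $\abs{S} \leq 1$ — satisfies after adjusting constants), so I would either note this explicitly or fold it into the asymptotic constants.

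\medskip

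I do not expect any real obstacle here; the argument is elementary counting. The one place to be slightly careful is the exact integer-point bound for the $\ell_1$-ball and making sure the constant $c$ is extracted in the right order (choose $c$ after the counting constant $f(k)$ is pinned down), but none of this is deep — it is the routine ``a ball of radius $r$ holds $O(r^k)$ lattice points, so throwing away a small ball around $x$ leaves a constant fraction far from $x$'' packing lemma.
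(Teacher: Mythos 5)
Your argument is correct: bounding the number of lattice points in the $\ell_1$-ball of radius $c\sqrt[k]{\abs{S}}$ by $O(c^k\abs{S})$, choosing $c$ small enough that at most half of $S$ lies in that ball, and taking $T$ to be the remainder establishes the lemma, and your handling of the bounded-$\abs{S}$ degenerate case via the asymptotic constants is appropriate. The paper itself states this as an ``easy lemma'' and supplies no proof, so there is nothing to diverge from; your volume/packing count is precisely the routine argument the paper leaves implicit.
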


We are now ready to prove our depth lower bound.

\begin{theorem}
  \label{thm:sens-depth-lb}
  Let $\{\cE_n : \cD_n \ra \cD_n\}$ be a family of $\eps$-sensitive operations.  Then any family of \kD{k} NANTC circuits $\{C_n\}$ such that $\trnorm{\cE_n - C_n} < \eps / 2$ for all $n$ has depth $\Omega(\sqrt[k]{n})$.
\end{theorem}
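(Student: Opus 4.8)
The plan is to combine the size lower bound from \thmref{sens-size-lb} with the packing lemma \lemref{packing}, using the key observation that in a depth-$d$ NANTC circuit, information can only propagate a bounded distance along the grid. More precisely, I would argue by contradiction: suppose $\{C_n\}$ is a family of \kD{k} NANTC circuits with $\trnorm{\cE_n - C_n} < \eps/2$ and depth $d = o(\sqrt[k]{n})$. Assume $\cE_n$ is $\eps$-input sensitive (the $\eps$-output sensitive case being essentially identical) and fix the qubit $y$ guaranteed by \defref{input-sens}, together with the set $S$ of $\Omega(n)$ qubits $x$ for which a perturbing map $\cF$ acting only on $x$ changes the output at $y$ by at least $\eps$ in trace norm.

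The heart of the argument is a lightcone bound. In a single physical timestep of a \kD{k} NANTC circuit, two-qubit gates only act between $\ell_1$-neighbors, so a perturbation localized at a qubit $x$ can influence, after $d$ timesteps, only those qubits within $\ell_1$-distance $d$ of $x$; equivalently, the qubit $y$ can only be affected by a perturbation at $x$ if $\norm{x - y}_1 \leq d$. I would state this as a short lemma: if $\cF$ acts only on a qubit $x$ with $\norm{x - y}_1 > d$, then $\tr_{\neg y}(C_n \cF) = \tr_{\neg y}(C_n)$, proved by a straightforward induction on the number of timesteps tracking the set of qubits whose reduced state can differ. Now apply \lemref{packing} with this $S$ and the point $x := y$: there is a subset $T \subseteq S$ with $\abs{T} = \Omega(\abs{S}) = \Omega(n)$ such that every $x \in T$ satisfies $\norm{x - y}_1 = \Omega(\sqrt[k]{\abs{S}}) = \Omega(\sqrt[k]{n})$. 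Since $d = o(\sqrt[k]{n})$, for large $n$ every $x \in T$ has $\norm{x - y}_1 > d$, so by the lightcone lemma $\tr_{\neg y}(C_n \cF) = \tr_{\neg y}(C_n)$ for the corresponding $\cF$. Then the same triangle-inequality chain as in the proof of \thmref{sens-size-lb} gives
\[
  \trnorm{\tr_{\neg y}(C_n - \cE_n)} \geq \abs{\trnorm{\tr_{\neg y}(C_n \cF - \cE_n \cF)} - \trnorm{\tr_{\neg y}(\cE_n \cF - \cE_n)}} > \eps/2,
\]
contradicting $\trnorm{\cE_n - C_n} < \eps/2$ (using that the trace norm of a restriction of a difference of CPTP maps is bounded by $\trnorm{\cE_n - C_n}$).

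The main obstacle, and the one step worth spelling out carefully, is the lightcone lemma itself: I need to make precise that "a depth-$d$ NANTC circuit cannot transmit a localized perturbation farther than $\ell_1$-distance $d$" in the correct operational sense, namely as an equality of reduced channels $\tr_{\neg y}(C_n \cF) = \tr_{\neg y}(C_n)$ rather than just a statement about states. The clean way is to commute $\cF$ leftward through $C_n$ gate by gate: $\cF$ commutes with any gate not touching its support, and when it hits a gate touching $x$ its support grows by at most the $\ell_1$-ball of radius one around the gate's qubits; after $d$ timesteps the support of the "evolved" perturbation lies in the $\ell_1$-ball of radius $d$ about $x$, which misses $y$, so tracing out everything but $y$ kills it. Everything else — the packing argument, the triangle inequality, monotonicity of trace norm under partial trace — is routine and parallels the already-proved \thmref{sens-size-lb}. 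Finally, the matching upper bound from \thmref{kD-control} and \thmref{kD-fanout} together with \lemref{CU-fanout-sens} shows the bound is tight, yielding \thmref{nantc-opt}.
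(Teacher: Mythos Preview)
Your proposal is correct and follows essentially the same approach as the paper: contradiction via the packing lemma plus a lightcone bound, followed by the identical triangle-inequality chain. The only cosmetic difference is that the paper formalizes the lightcone by defining ``active'' operations backward from $y$, whereas you phrase it as commuting $\cF$ forward through the circuit; these are dual descriptions of the same $\ell_1$-distance-$d$ lightcone and yield the same conclusion $\tr_{\neg y}(C_n\cF)=\tr_{\neg y}(C_n)$.
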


\begin{proof}
  Suppose $\{C_n\}$ has depth $t = o(\sqrt[k]{n})$.  Assume that $\cE_n$ is $\eps$-input sensitive (the case where it is $\eps$-output sensitive is very similar) and choose a qubit $y$ as in \defref{input-sens}.  There is a set $S$ of $\Omega(n)$ qubits such that for each $x \in S$, there exists a CPTP map $\cF : \cD_n \ra \cD_n$ acting only on $x$ with $\trnorm{\tr_{\neg y} (\cE_n \cF - \cE_n)} \geq \eps$.  Let $c > 0$ be the hidden constant in the expression $\Omega(\sqrt[k]{\abs{S}})$ from \lemref{packing}.  For sufficiently large $n$, the depth of $C_n$ is strictly less than $c \sqrt[k]{n}$.  Let $G_i$ be the set of disjoint one- and two-qubit operations that are performed at timestep $1 \leq i \leq t$ in $C_n$.  For an operation $M \in G_i$, let us say that $M$ is \emph{active} if

  \begin{enumerate}
  \item $M$ acts non-trivially on $y$ or
  \item there is an operation $M' \in G_j$ with $i < j \leq t$ such that $M'$ is active and $M$ and $M'$ act non-trivially on a common qubit
  \end{enumerate}

  Let us say that a qubit $x$ \emph{influences} $y$ if there exists an active operation $M \in G_i$ that acts non-trivially on $x$.  Suppose $x$ influences $y$ after $t$ timesteps.  Because all operations act on pairs of adjacent qubits, the $\ell_1$ distance between $x$ and $y$ is at most $t$.  By \lemref{packing}, there exists a subset $T$ of $S$ of size $\Omega(n)$ such that $\norm{x - y}_1 \geq c \sqrt[k]{n}$ for all $x \in T$.  Because $t < c \sqrt[k]{n}$, $x$ does not influence $y$ for $x \in T$.  Let us fix some $x \in T$.  Choosing a $\cF$ acting only on $x$ as in \defref{input-sens}, we have \intqc{$\trnorm{\tr_{\neg y} (C_n - \cE_n)} \geq \abs{\trnorm{\tr_{\neg y} (C_n \cF - \cE_n \cF)} - \trnorm{\tr_{\neg y} (\cE_n \cF - \cE_n)}} > \eps / 2$}
\notintqc{
  \begin{align}
    \trnorm{\tr_{\neg y} (C_n - \cE_n)} & = \trnorm{\tr_{\neg y} (\cF C_n - \cE_n)} \\
                                     {} & \geq \abs{\trnorm{\tr_{\neg y} (C_n \cF - \cE_n \cF)} - \trnorm{\tr_{\neg y} (\cE_n \cF - \cE_n)}} \\
                                     {} & > \eps / 2
  \end{align}

}
  which is a contradiction.
\end{proof}

By \lemref{CU-fanout-sens}, we obtain the following corollary.

\begin{corollary}
  \label{cor:CU-fanout-depth-lb}
  Let $\{\cE_n : \cD_n \ra \cD_n\}$ denote a family of controlled-$U$ operations or fanouts.  Any family of \kD{k} NANTC circuits $\{C_n\}$ such that $\trnorm{C_n - \cE_n} < 1 / 2$ has depth $\Omega(\sqrt[k]{n})$.
\end{corollary}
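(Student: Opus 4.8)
The plan is to obtain this as an immediate consequence of Theorem~\ref{thm:sens-depth-lb} together with Lemma~\ref{lem:CU-fanout-sens}. First I would invoke Lemma~\ref{lem:CU-fanout-sens} to record that every non-trivial controlled-$U$ operation with $n$ controls and every fanout to $n$ targets is $1$-sensitive; consequently a family $\{\cE_n\}$ of such operations is a family of $\eps$-sensitive operations with $\eps = 1$. (When $U = I$ the controlled-$U$ operation is the identity and the claimed lower bound is vacuous, so we may assume $U \neq I$, which is exactly the non-triviality needed for Lemma~\ref{lem:CU-fanout-sens}.)

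Next I would apply Theorem~\ref{thm:sens-depth-lb} to this family with $\eps = 1$. Since the trace norm is symmetric, the corollary's hypothesis $\trnorm{C_n - \cE_n} < 1/2$ is literally the theorem's hypothesis $\trnorm{\cE_n - C_n} < \eps/2$, so the theorem applies verbatim and yields that $\{C_n\}$ has depth $\Omega(\sqrt[k]{n})$, which is the assertion.

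There is essentially no obstacle here: all the substance lies in Theorem~\ref{thm:sens-depth-lb}, whose proof (the active-operation/influence argument combined with the packing Lemma~\ref{lem:packing}) already appears above, and in the easy verification underlying Lemma~\ref{lem:CU-fanout-sens} — namely, taking the target qubit as the distinguished qubit $y$ of \defref{input-sens} (respectively \defref{output-sens}) and, for each control (respectively target) $x$, letting $\cF$ reset or flip $x$, which changes the reduced state on $y$ for a suitably chosen input. The only things to state carefully are the non-triviality hypothesis on $U$ and the symmetry of the trace norm; both are routine.
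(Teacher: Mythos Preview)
Your proposal is correct and matches the paper's own argument exactly: the paper simply states that by \lemref{CU-fanout-sens} the corollary follows from \thmref{sens-depth-lb}, and you have spelled out precisely this deduction (with the additional care of noting that the $U = I$ case is vacuous).
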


From Theorems~\ref{thm:kD-control} and \ref{thm:kD-fanout} and Corollaries~\ref{cor:CU-fanout-size-lb} and~\ref{cor:CU-fanout-depth-lb}, we conclude that \algref{control-2D} and its \kD{k} generalization are optimal in their depth, size and width.

\nantcopt*


\intqc{\vspace{-15pt}}
\section*{Acknowledgments}
\intqc{\vspace{-6pt}}

I thank Paul Beame and Aram Harrow for useful discussions and feedback and the anonymous reviewers for helpful comments.  Aram Harrow suggested the use of teleportation chains as a primitive.  Paul Pham suggested applying the technique of \algref{control-2D} to fanouts.  I was funded by the DoD AFOSR through an NDSEG fellowship.  Partial support was provided by IARPA under the ORAQL project.

\newcommand{\references}{
  \intqc{\vspace{-12pt}}
  \bibliographystyle{initials}
  \bibliography{$HOME/LaTeX/computer-science-references,$HOME/LaTeX/math-references,$HOME/LaTeX/quantum-computing-references} 

\begin{thebibliography}{10}

\bibitem{aharnov1996a}
D.~{Aharonov}, M.~{Ben-Or}, R.~{Impagliazzo}, and N.~{Nisan}.
\newblock {Limitations of Noisy Reversible Computation}.
\newblock {\em ArXiv e-prints}, 1996.

\bibitem{bennett1993a}
C.~H. Bennett, G.~Brassard, C.~Cr\'epeau, R.~Jozsa, A.~Peres, and W.~K.
  Wootters.
\newblock Teleporting an unknown quantum state via dual classical and
  {Einstein-Podolsky-Rosen} channels.
\newblock {\em Physical Review Letters}, 70, 1993.

\bibitem{browne2010a}
D.~E. Browne, E.~Kashefi, and S.~Perdrix.
\newblock Computational depth complexity of measurement-based quantum
  computation.
\newblock In {\em In Proceedings of the Fifth Conference on the Theory of
  Quantum Computation, Communication and Cryptography}, 2010.

\bibitem{cheung2007a}
D.~Cheung, D.~Maslov, and S.~Severini.
\newblock Translation techniques between quantum circuit architectures.
\newblock In {\em Workshop on Quantum Information Processing}, 2007.

\bibitem{choi2010a}
B.-S. {Choi} and R.~{Van Meter}.
\newblock An {\(\Theta(\sqrt{n})\)}-depth quantum adder on a {2D NTC} quantum
  computer architecture.
\newblock {\em ArXiv:1008.5093}, 2010.

\bibitem{choi2011a}
B.-S. Choi and R.~Van~Meter.
\newblock On the effect of quantum interaction distance on quantum addition
  circuits.
\newblock {\em ACM Journal on Emerging Technologies in Computing Systems},
  7:11:1--11:17, 2011.

\bibitem{fowler2004a}
A.~G. {Fowler}, S.~J. {Devitt}, and L.~C.~L. {Hollenberg}.
\newblock {Implementation of Shor's Algorithm on a Linear Nearest Neighbour
  Qubit Array}.
\newblock {\em ArXiv e-prints}, 2004.

\bibitem{hoyer2005a}
P.~H{\o}yer and R.~{\v S}palek.
\newblock Quantum fan-out is powerful.
\newblock {\em Theory of Computing}, 1:81--103, 2005.

\bibitem{kutin2006a}
S.~A. {Kutin}.
\newblock {Shor's algorithm on a nearest-neighbor machine}.
\newblock {\em ArXiv e-prints}, 2006.

\bibitem{maslov2007a}
D.~{Maslov}.
\newblock {Linear depth stabilizer and quantum Fourier transformation circuits
  with no auxiliary qubits in finite-neighbor quantum architectures}.
\newblock {\em ArXiv e-prints}, 2007.

\bibitem{moore1999a}
C.~{Moore}.
\newblock {Quantum Circuits: Fanout, Parity, and Counting}.
\newblock {\em ArXiv e-prints}, 1999.

\bibitem{pham2012a}
P.~{Pham} and K.~M. {Svore}.
\newblock {A 2D Nearest-Neighbor Quantum Architecture for Factoring}.
\newblock {\em ArXiv e-prints}, 2012.

\bibitem{raussendorf2001a}
R.~Raussendorf and H.~J. Briegel.
\newblock A one-way quantum computer.
\newblock {\em Physical Review Letters}, 86:5188--5191, 2001.

\bibitem{raussendorf2002a}
R.~{Raussendorf}, D.~E. {Browne}, and H.~J. {Briegel}.
\newblock {The one-way quantum computer--a non-network model of quantum
  computation}.
\newblock {\em ArXiv e-prints}, 2002.

\bibitem{shor1994a}
P.~W. Shor.
\newblock Algorithms for quantum computation: Discrete logarithms and
  factoring.
\newblock In {\em Annual Symposium on Foundations of Computer Science}, 1994.

\bibitem{takahashi2011a}
Y.~{Takahashi} and S.~{Tani}.
\newblock {Constant-Depth Exact Quantum Circuits for the OR and Threshold
  Functions}.
\newblock {\em ArXiv e-prints}, 2011.

\bibitem{terhal2002a}
B.~M. {Terhal} and D.~P. {DiVincenzo}.
\newblock Adaptive quantum computation, constant depth quantum circuits and
  {Arthur-Merlin} games.
\newblock {\em ArXiv e-prints}, 2002.

\bibitem{vanmeter2005a}
R.~Van~Meter and K.~M. Itoh.
\newblock Fast quantum modular exponentiation.
\newblock {\em Phys. Rev. A}, 71:052320, 2005.

\end{thebibliography}
}

\intqc{\references}

\appendix
\ifthenelse{\equal{\showscript}{true}}{
  \section{Generating the figures}
  The figures in this paper were formatted using tikz code that was generated using the following Python 3 script:
  
  \lstinputlisting[breaklines = true, language = Python]{draw.py}}{}

\inlong{\newpage
  \FloatBarrier
  \section{More Examples}
  \label{app:examples}
  We now present the implementation of controlled-$U$ operations in $7 \times 7$ and $9 \times 9$ \kD{2} NANTC grids.  This is shown for $m = 7$ in \figref{concc-7x7}.  As before, it is necessary to uncompute the intermediate ancillas by applying the operations of Figures~\ref{fig:concc-7x7}\subref*{fig:concc-7x7-1}--\subref*{fig:concc-7x7-6} in reverse order.  We also show the case where $m = 9$ in \figref{concc-9x9}.  In this case, we apply the operations of Figures~\ref{fig:concc-9x9}\subref*{fig:concc-9x9-1}--\subref*{fig:concc-9x9-8} in reverse order to uncompute the intermediate ancillas.

  \begin{figure}[H]
    \centering
    \subfloat[][]{
      \label{fig:concc-7x7-0}
      \includegraphics{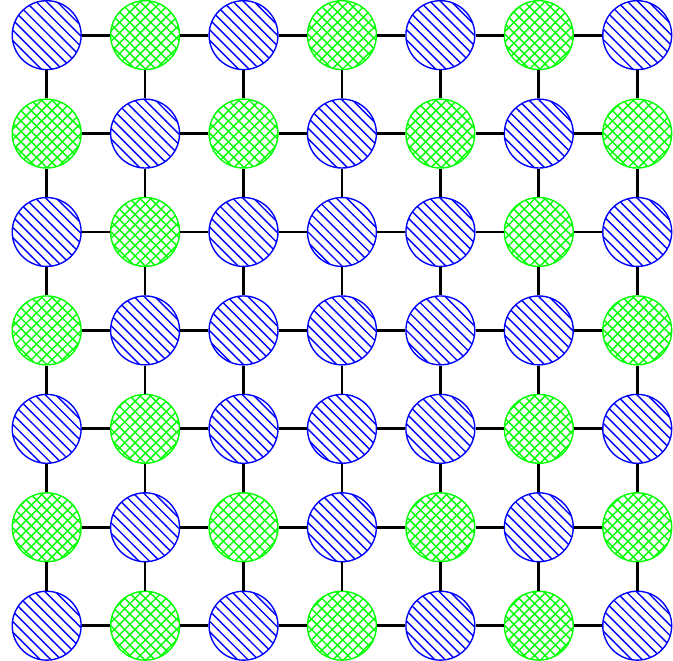}}
    \subfloat[][]{
      \label{fig:concc-7x7-1}
      \includegraphics{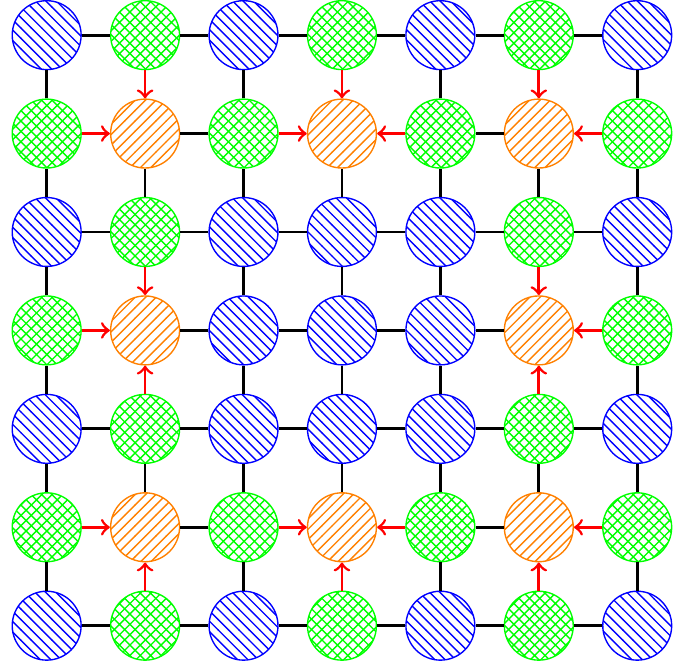}} \\
    \subfloat[][]{
      \label{fig:concc-7x7-2}
      \includegraphics{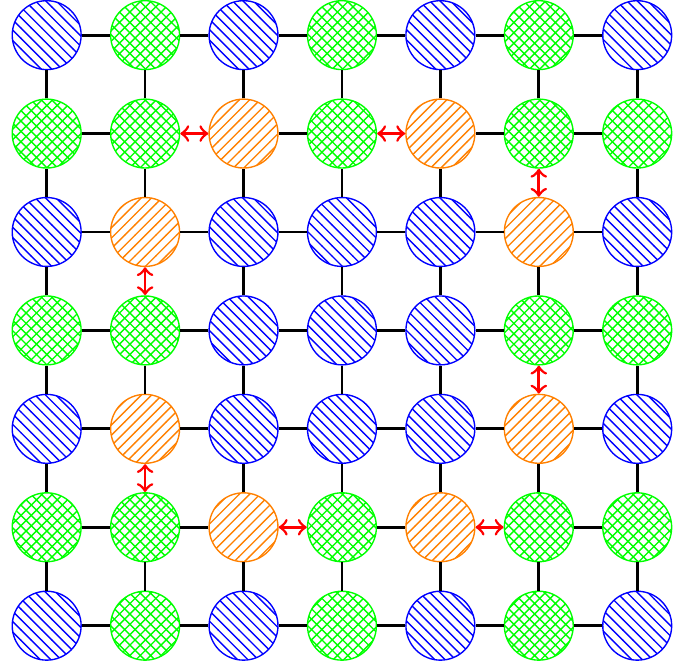}}
    \subfloat[][]{
      \label{fig:concc-7x7-3}
      \includegraphics{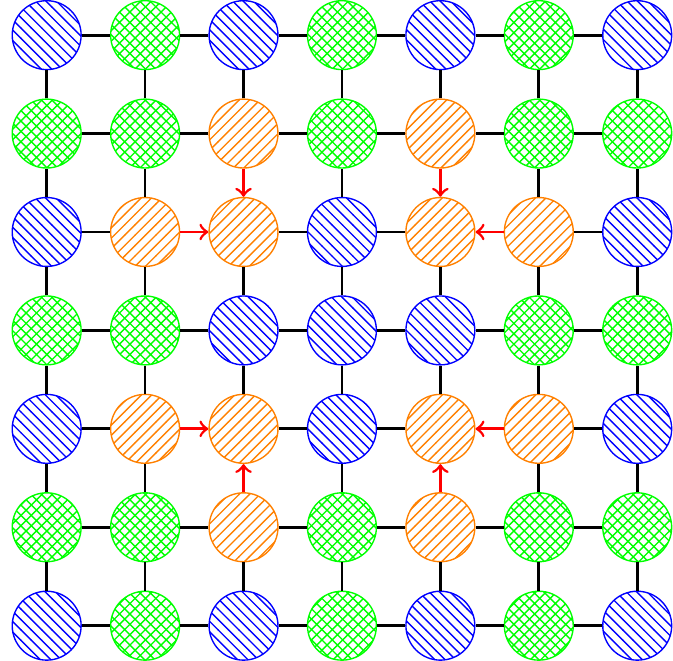}}
    \caption{A controlled operation on a $7 \times 7$ grid.  See \figref{concc-3x3} for the meaning of the colors and shadings used.}
    \label{fig:concc-7x7}
  \end{figure}

  \begin{figure}[H]
    \ContinuedFloat
    \centering
    \subfloat[][]{
      \label{fig:concc-7x7-4}
      \includegraphics{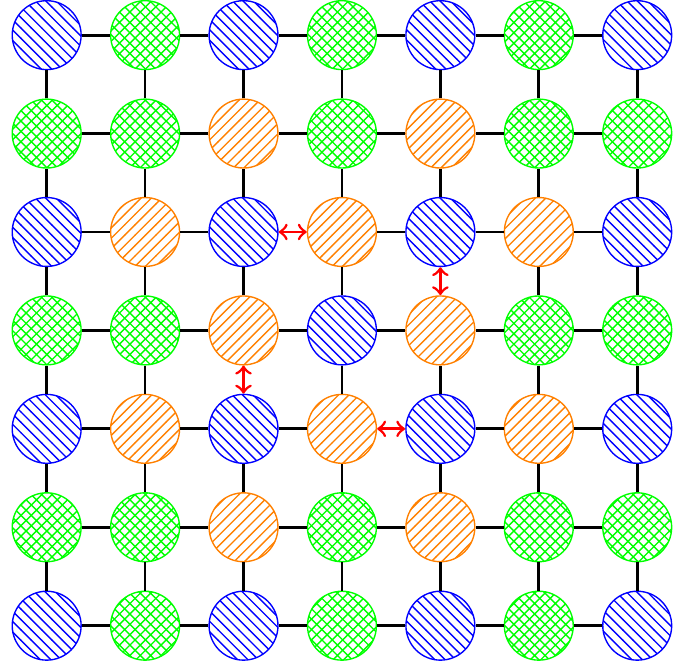}}
    \subfloat[][]{
      \label{fig:concc-7x7-5}
      \includegraphics{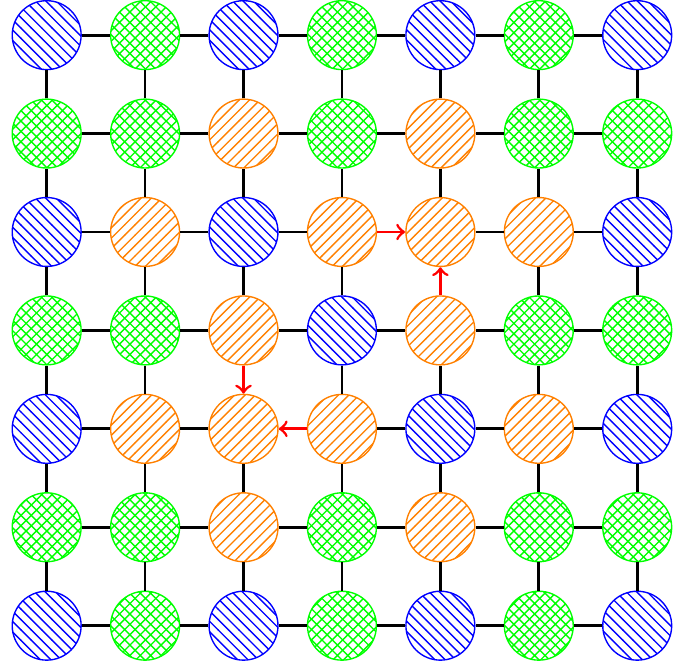}} \\
    \subfloat[][]{
      \label{fig:concc-7x7-6}
      \includegraphics{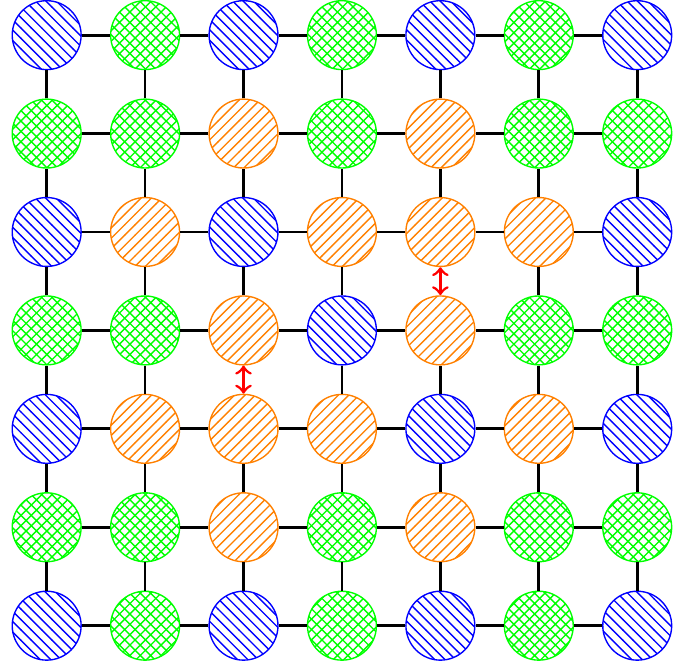}}
    \subfloat[][]{
      \label{fig:concc-7x7-7}
      \includegraphics{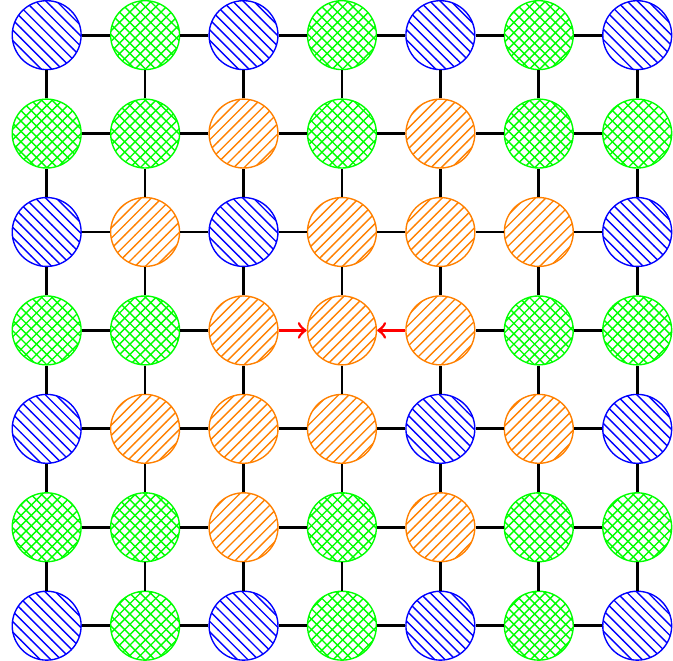}}
    \caption[]{A controlled operation on a $7 \times 7$ grid}
  \end{figure}

  \resetsubfig

  \begin{figure}[H]
    \centering
    \noindent\makebox[\textwidth]{
      \subfloat[][]{
        \label{fig:concc-9x9-0}
        \includegraphics{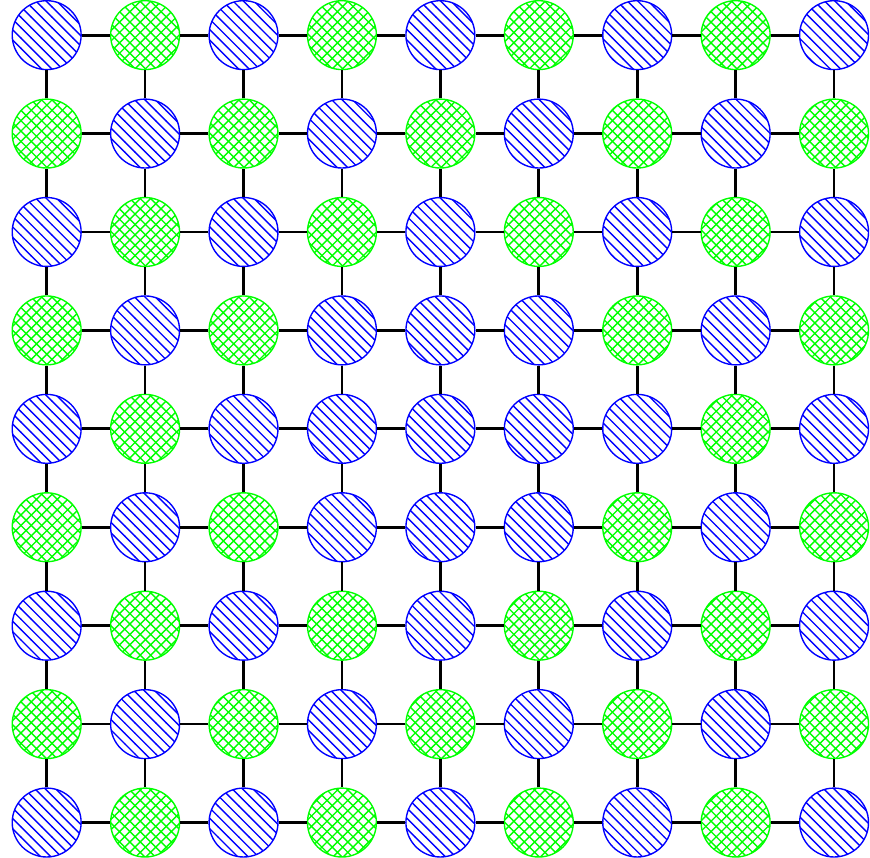}}
      \subfloat[][]{
        \label{fig:concc-9x9-1}
        \includegraphics{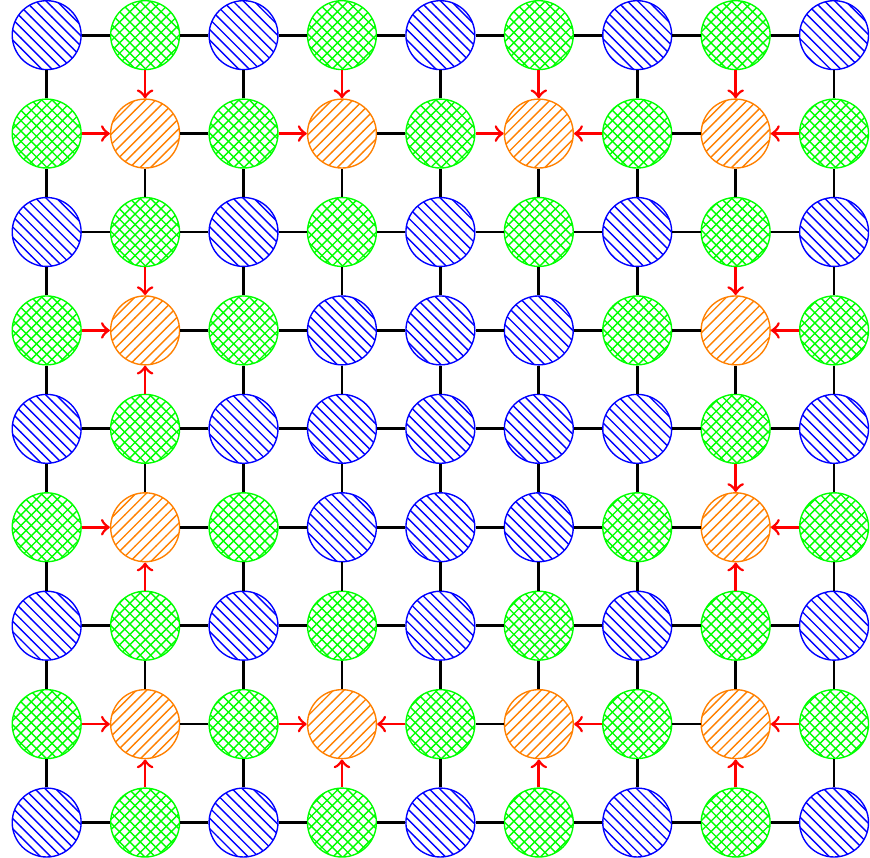}}} \\
    \noindent\makebox[\textwidth]{
      \subfloat[][]{
        \label{fig:concc-9x9-2}
        \includegraphics{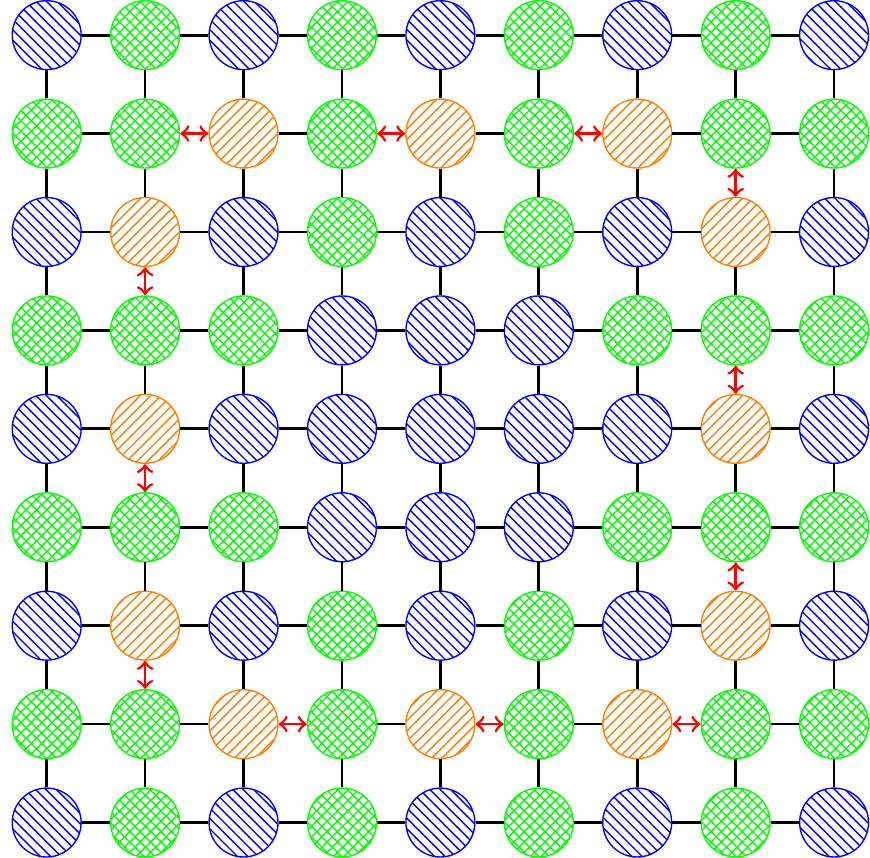}}
      \subfloat[][]{
        \label{fig:concc-9x9-3}
        \includegraphics{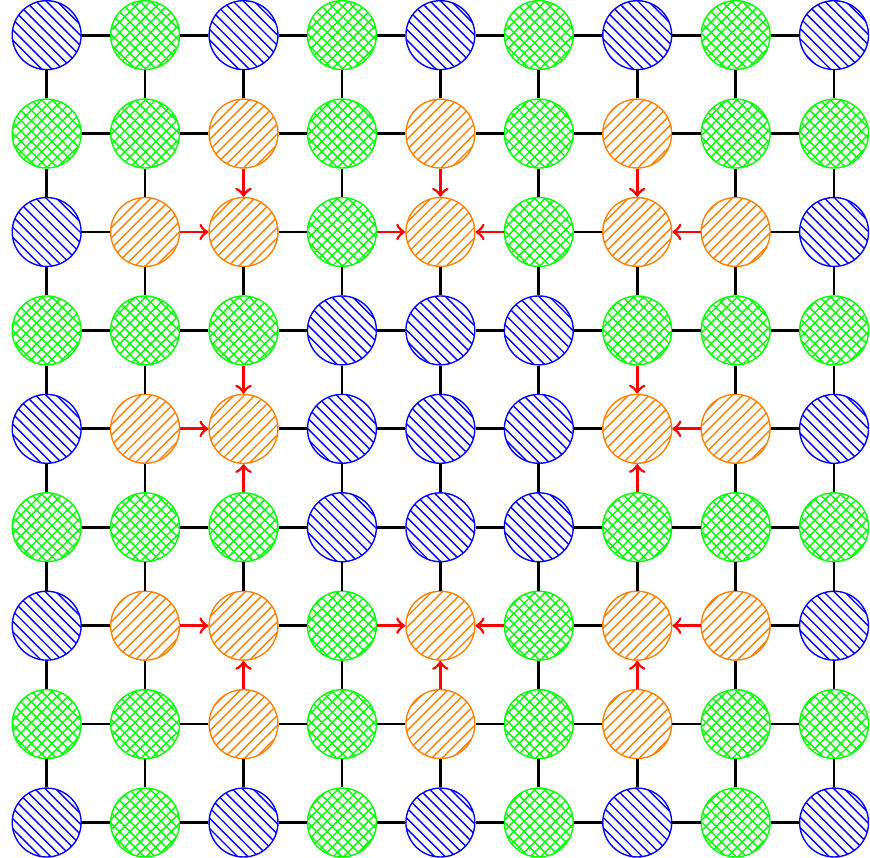}}}
    \caption{A controlled operation on a $9 \times 9$ grid.  See \figref{concc-3x3} for the meaning of the colors and shadings used.}
    \label{fig:concc-9x9}
  \end{figure}

  \begin{figure}[H]
    \ContinuedFloat
    \centering
    \noindent\makebox[\textwidth]{
      \subfloat[][]{
        \label{fig:concc-9x9-4}
        \includegraphics{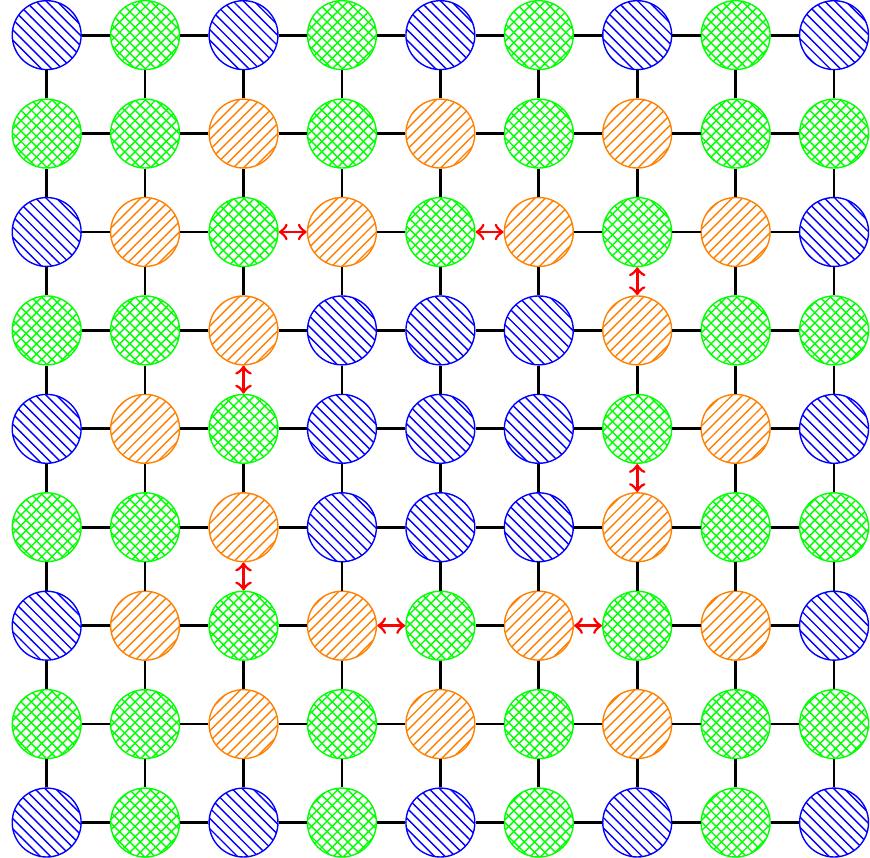}}
      \subfloat[][]{
        \label{fig:concc-9x9-5}
        \includegraphics{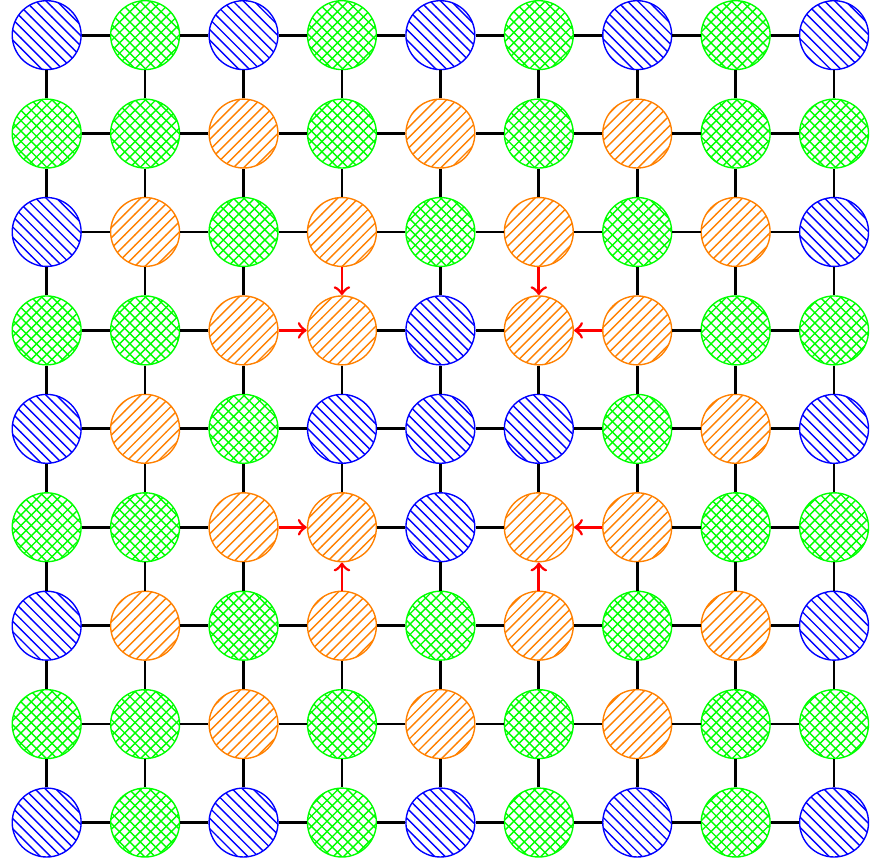}}} \\
    \noindent\makebox[\textwidth]{
      \subfloat[][]{
        \label{fig:concc-9x9-6}
        \includegraphics{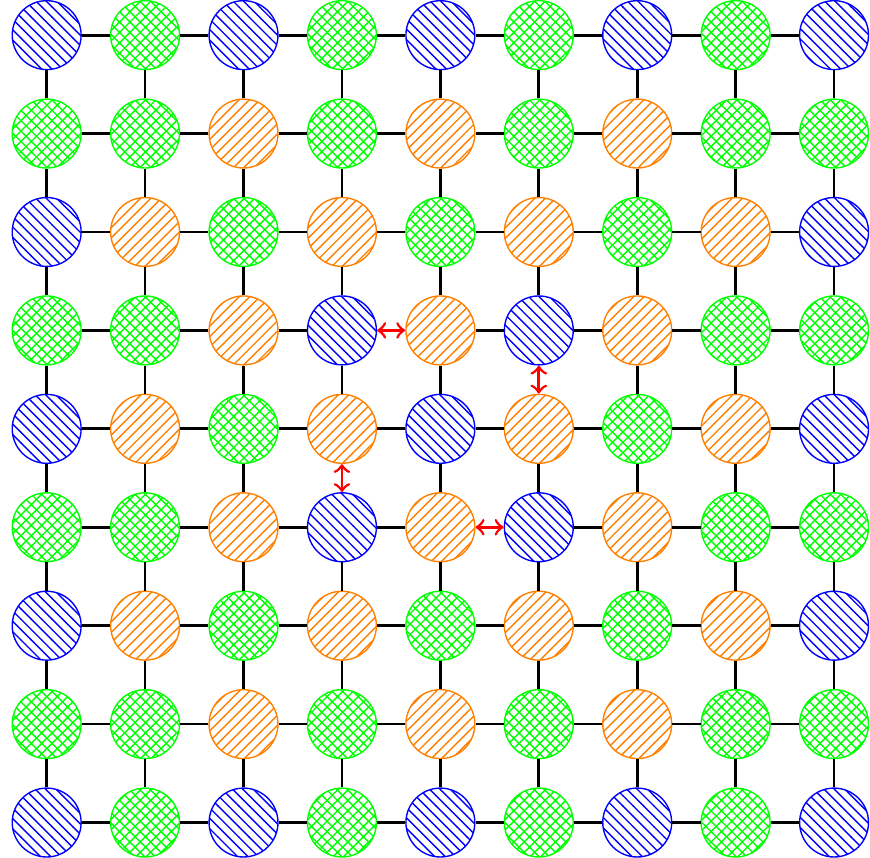}}
      \subfloat[][]{
        \label{fig:concc-9x9-7}
        \includegraphics{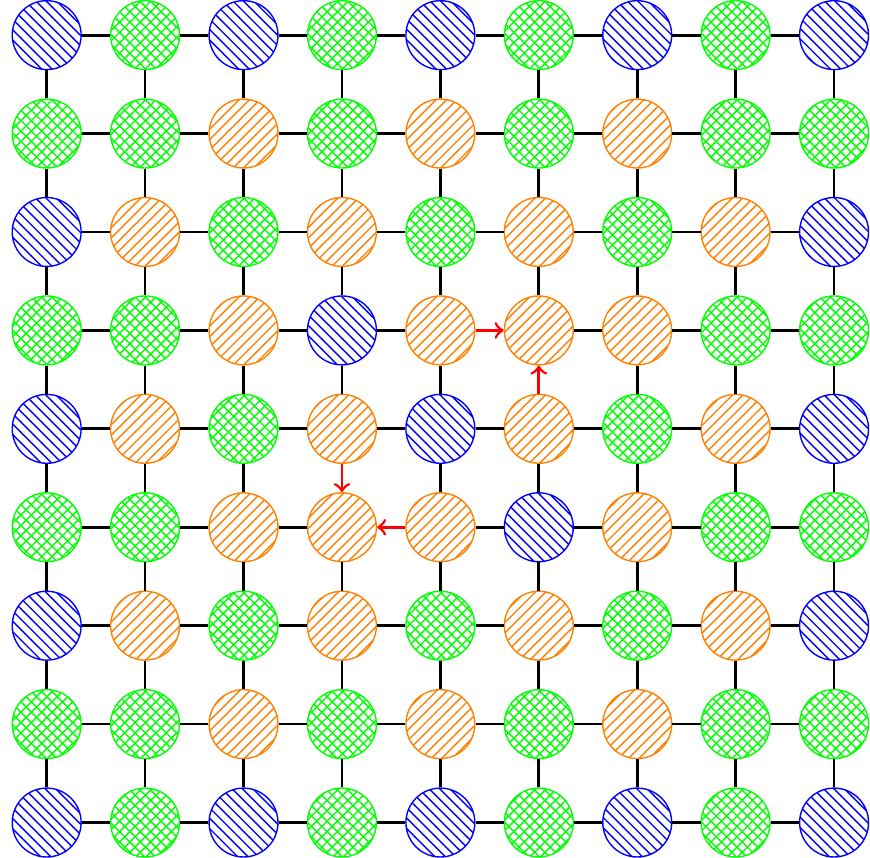}}}
    \caption[]{A controlled operation on a $9 \times 9$ grid}
  \end{figure}

  \begin{figure}[H]
    \ContinuedFloat
    \centering
    \noindent\makebox[\textwidth]{
      \subfloat[][]{
        \label{fig:concc-9x9-8}
        \includegraphics{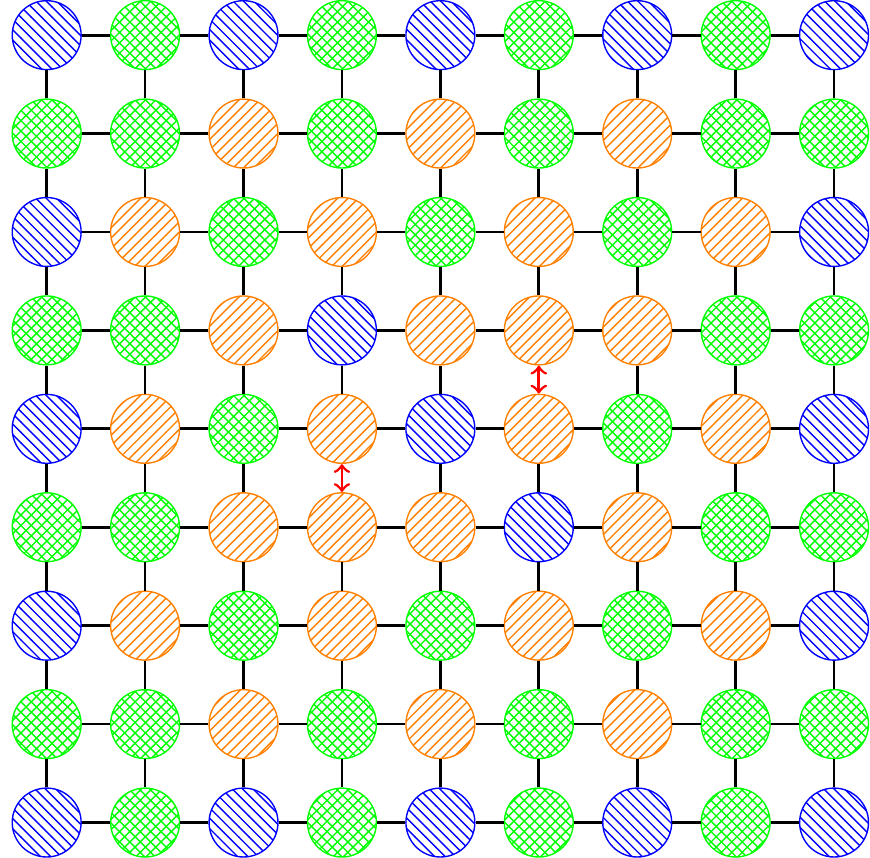}}
      \subfloat[][]{
        \label{fig:concc-9x9-9}
        \includegraphics{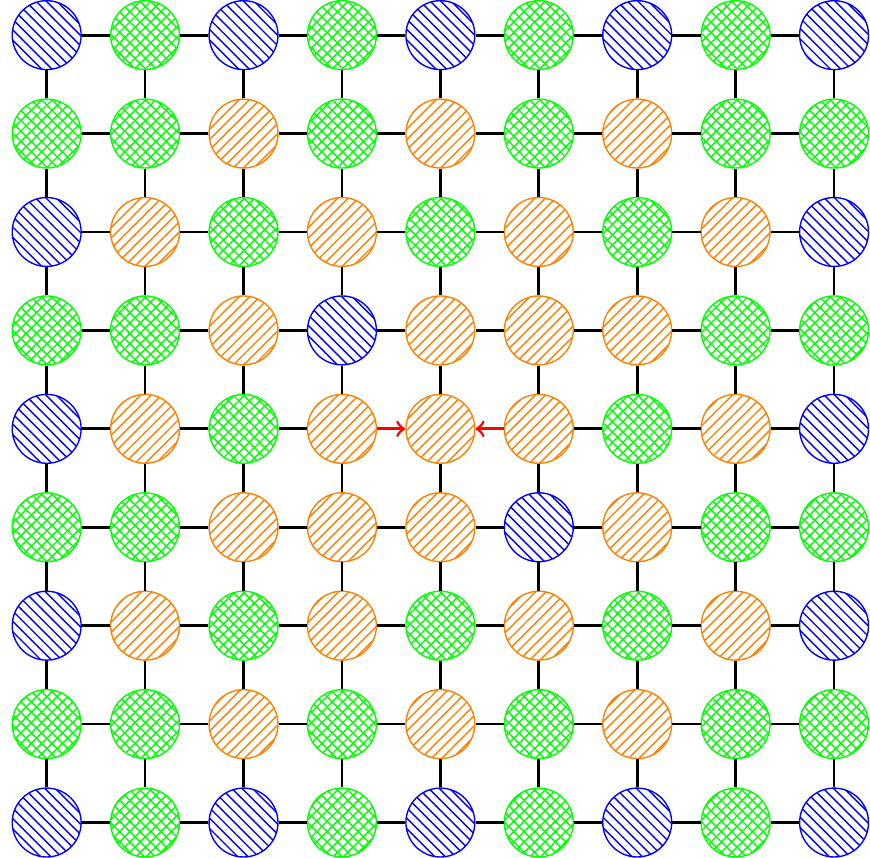}}}
    \caption[]{A controlled operation on a $9 \times 9$ grid}
  \end{figure}

  \resetsubfig}

\notintqc{\references}

\end{document}